\newcommand{\subgrgen}[1]{\left\langle #1 \right\rangle}
\DeclarePairedDelimiter\abs{\lvert}{\rvert}%
\DeclarePairedDelimiter\norm{\|}{\|}%
\def\diffd{\mathrm{d}} %
\DeclareDocumentCommand\differential{ o g d() }{ %
	\IfNoValueTF{#2}{
		\IfNoValueTF{#3}
			{\diffd\IfNoValueTF{#1}{}{^{#1}}}
			{\mathinner{\diffd\IfNoValueTF{#1}{}{^{#1}}\argopen(#3\argclose)}}
		}
		{\mathinner{\diffd\IfNoValueTF{#1}{}{^{#1}}#2} \IfNoValueTF{#3}{}{(#3)}}
	}
\DeclareDocumentCommand\dd{}{\differential} %
\newcommand{\inner}[2]{\left\langle {#1}, {#2} \right \rangle}
\newcommand{\rev}[1]{#1}
\def\footnoterule{
  \hrule \@width 0.4 in \kern 3\p@} 
\newtheorem{theorem}{Theorem}%
\newtheorem{proposition}[theorem]{Proposition}%
\newtheorem{lemma}[theorem]{Lemma}%
\newtheorem{definition}[theorem]{Definition}%
\begin{document}

\title{Mitigating exponential concentration in covariant quantum kernels for subspace and real-world data}

\author*[2]{\fnm{Gabriele} \sur{Agliardi}}\email{gabriele.agliardi@it.ibm.com}
\author[1]{\fnm{Giorgio} \sur{Cortiana}}
\author[3]{\fnm{Anton} \sur{Dekusar}}
\author[1]{\fnm{Kumar} \sur{Ghosh}}
\author[1]{\fnm{Naeimeh} \sur{Mohseni}}
\author*[1]{\fnm{Corey} \sur{O'Meara}}\email{corey.o'meara@eon.com}
\author[3]{\fnm{V\'ictor} \sur{Valls}}
\author[4]{\fnm{Kavitha} \sur{Yogaraj}}
\author[3]{\fnm{Sergiy} \sur{Zhuk}}

\affil[1]{\orgdiv{E.ON Digital Technology GmbH}, \state{Hanover}, \country{Germany}}
\affil[2]{\orgdiv{IBM Quantum, IBM Research -- Italy}}
\affil[3]{\orgdiv{IBM Quantum, IBM Research Europe -- Dublin}}
\affil[4]{\orgdiv{IBM Quantum, IBM Research -- India}}

\abstract{
Fidelity quantum kernels have shown promise in classification tasks, particularly when a group structure in the data can be identified and exploited through a covariant feature map. In fact, there exist classification problems on which covariant kernels provide a provable advantage, thus establishing a separation between quantum and classical learners.
However, their practical application poses two challenges: on one side, the group structure may be unknown and approximate in real-world data, and on the other side, scaling to the `utility' regime (above $100$ qubits) is affected by exponential concentration. 
In this work, we propose a novel error mitigation strategy specifically tailored for fidelity kernels, called Bit Flip Tolerance (BFT), to \rev{alleviate} exponential concentration in \rev{our} utility-scale experiments.
\rev{We apply fidelity kernels to real-world data with unknown structure, related to the scheduling of a fleet of electric vehicles, and to synthetic data generated from the union of subspaces, which is then close to many relevant real-world datasets.}
Our multiclass classification reaches accuracies comparable to classical SVCs up to $156$ qubits, thus constituting the largest experimental demonstration of quantum machine learning on IBM devices to date. %
For the real-world data experiments, \rev{the proposed BFT proves useful} 
on $40+$ qubits, where mitigated accuracies reach $80\%$, in line with classical, compared to $33\%$ without BFT. Through the union-of-subspace synthetic dataset with $156$ qubits, we demonstrate a mitigated accuracy of $80\%$, compared to $83\%$ of classical models, and $37\%$ of unmitigated quantum, using a test set of limited size.

}
\keywords{fidelity quantum kernels, noise mitigation, utility scale experiments}

\newgeometry{textwidth=144 mm}
\maketitle

\newgeometry{twocolumn, textwidth=174 mm}

\clearpage
\section{Introduction}\label{sec:intro}
Kernel methods are a powerful technique with applications in Support Vector Classifiers (SVCs)~\cite{boser1992training}.   
In brief, kernels empower SVCs to classify non-linearly separable data by mapping it into a higher-dimensional space~\cite{scholkopf2018learning}, where data is linearly separable. The performance of a specific kernel is closely connected to the structure of the data. For example, polynomial kernels often perform best in image processing and speech recognition tasks \cite{meijering1999image}, while tangent kernels are better suited for neural-network-inspired SVC.

Over the last few years, quantum kernels have received increased attention because they can extend the capabilities of their classical counterparts. In short, while the task of classical and quantum kernels is the same (i.e., map data into a higher dimensional space), quantum kernels can use quantum states and transformations to map data into a quantum Hilbert space: a much larger space that is generally not reachable by classical computers~\cite{PhysRevLett.113.130503, huang2021power}. %
Fidelity kernels~\cite{havlivcek2019supervised} provide a provable speedup over classical methods, at least for artificially constructed problems~\cite{liu2021rigorous}. Despite fidelity quantum kernels are promising, %
their application to real-world problems is yet at an initial stage~\cite{slattery_numerical_2023,krunic_quantum_2022}. Two important blockers limit the effective use of fidelity quantum kernels in practice. %
First, fidelity quantum kernels are currently known to have an ideal behavior when data has a group structure and the feature maps is `covariant' with the structure~\cite{glick_covariant_2024}. In fact, such structure was demonstrated only through synthetic datasets~\cite{liu2021rigorous, glick_covariant_2024}.
Second, it is challenging to implement fidelity quantum kernels successfully at \emph{utility scale} with real-world data on the current noisy quantum hardware. Technically, quantum kernels suffer from the exponential concentration~\cite{thanasilp2024exponential}, which results in poor generalization capabilities. %
This issue cannot be resolved by merely increasing the dataset size. Instead, one must increase the number of measurement shots exponentially, which is impractical for problems that involve 100+ qubits. %

To bring fidelity quantum kernels to practical problems at utility scale, we address the two challenges above through the following contributions.

\begin{itemize}
\item \textbf{Covariant kernels.} %
We ask the question: %
Are fidelity kernels suited to exploit other types of data structures?
In Prop.~\ref{prop:group}, we show that an exact match between a fidelity kernel and the ideal kernel, is always associated with the presence of the particular `covariant' group structure in the quantum unitaries defining the feature maps. 
At the same time, though, it is not necessary to know said structure a priori. Practically, this implies that it is always possible to experiment fidelity kernels along with other classical or quantum kernels, in the quest of the best classification performance. In the case of fidelity kernels, the prescribed structure may emerge from the combination of the data with the encoding chosen for the data itself.

\item \textbf{Concentration and bit-flip tolerance (BFT).}  %
We propose a \rev{noise} mitigation strategy specifically tailored to fidelity kernels. %
In general, the value of a fidelity kernel on a pair of data points, is the fidelity between two statevectors representing the two data points. In turn, the fidelity is the probability of sampling the bitstring $0^n$ from the associated quantum circuit. The practical estimation of said probability through measurements, though, becomes difficult when the number $n$ of qubits grows. Therefore in the proposed method we introduce a bit flip tolerance (BFT), namely we count the bitstrings that are \textit{close} to $0^n$ in the Hamming distance. %
\rev{The proposed technique also alleviates concentration, since noise is one of its root causes~\cite{thanasilp2024exponential}.}

\item \textbf{Experiments.} We apply kernel alignment~\cite{glick_covariant_2024} to two different types of multi-class classification problems: (i) a real-world dataset related to how electric vehicles should charge or discharge to maximize a utility function~\cite{agliardi2024machine}, and (ii) an artificial dataset having the structure of the union of subspaces. 
The latter dataset is introduced to test the applicability of the BFT method at utility scale, since increasing the number of features beyond a given point, does not necessarily improve the classification accuracy with our real-world data. Instead of generating a synthetic covariant dataset tout court, we opt to employ a dataset from the union of subspaces, as datasets of this form are known to be practically relevant~\cite{elhamifar2013sparse, BahadoriKFL15, Khodadadzadeh7120510}, while showing good properties that connect them to the fidelity kernels.
The experiments provide the following insights: (i) Our calibration of the BFT suggests that the threshold for the Hamming weight should be scaled linearly with the number of qubits. (ii) The 10-qubit quantum kernel implemented on noisy hardware with real data performs comparably to a classical kernel, achieving an accuracy of $87\%$, compared to $80\%$ of our classical benchmark. At this scale, \rev{performance without BFT is already comparable to classical, and BFT does not degrade it}.
Experiments are limited to a single instance due to \rev{the significant resources required by hardware execution}. (iii) Comparable results between quantum and classical are also observed for synthetic data at utility scale, with an accuracy of $80\%$, against $83\%$ of the classical model, on 156 features, thus providing the largest quantum machine learning experiment on IBM hardware to-date. This suggests that fidelity kernels may be exploited for the well studied class of subspace clustering problems~\cite{vidal2011subspace, elhamifar2013sparse}, opening future avenues for research. (iv) The BFT method is essential for good accuracy on $40+$ qubits. In fact, real-world data with $40$ qubits have an accuracy up to $80\%$ with BFT, dropping to $33\%$ without BFT. Similarly, on 100 qubits, experiments reach $80\%$ to $100\%$ accuracy with BFT, compared to $37\%$ without BFT.

\end{itemize}

\section{Preliminaries}\label{sec:preliminaries}

\subsection{Kernels}

A \textit{kernel} $k(x, x')$ is a function that maps two input data points $x$ and  $x'$ from an input space $\mathcal{X}$ to a real number, quantifying their similarity. 
In Machine Learning, kernels are typically required to be positive semi-definite.  This means that for any finite sequence $\{x_i\}_{i=1}^M$, with $x_i \in \mathcal{X}$ for all $i$, the \textit{kernel matrix} $K_{ij} = k(x_i, x_j)$ is positive semi-definite. The restriction to positive semi-definite kernels simplifies empirical risk minimization through the representer theorem in statistical learning~\cite{mohri_foundations_2018}, and therefore the fitting of classification models. By Mercer's theorem~\cite{mohri_foundations_2018}, positive semi-definiteness of kernels is equivalent to the existence of a function $\phi: \mathcal{X} \to \mathcal{H}$, called \textit{feature map}, valued in a high-dimensional feature space $\mathcal{H}$, such that $k(x, x') = \langle \phi(x), \phi(x') \rangle$, where $\langle \cdot, \cdot \rangle$ denotes the inner product defined in $\mathcal{H}$.

A two-class support vector classifier (SVC) finds the hyperplane in the feature space $\mathcal H \supset \phi(\mathcal X)$ that best separates the classes~\cite{shawe-taylor_kernel_2004, mohri_foundations_2018}.
The extension of SVCs from binary classification to multi-class, typically reduces to the the application of multiple binary problems~\cite{mohri_foundations_2018}. In this paper, we use the one-vs-one (1-vs-1) approach, where a binary classification is performed for each pair of classes. Each model assigns a vote to the predicted class for a given sample, and after tallying the votes, the class with the highest score is selected as the final prediction. %

The performance of quantum kernels is highly dependent on the choice of kernel, and selecting an appropriate kernel for a classification problem can be challenging. Therefore, it is crucial to design quantum kernels that exploit the underlying structure of the dataset. Kernel alignment~\cite{shawe-taylor_kernel_2004} was introduced to address this challenge. %
It is a measure $\mathcal{A}(K_t, K)$ of the similarity between two kernel matrices, particularly useful to quantify the agreement of a given kernel function $k$ against a target or ideal kernel $k_t$, which is based on the target labels. In practice, the ideal kernel can only be computed on the training set, due to the availability of the labels. High kernel alignment indicates that $k$ closely matches the target $k_t$, suggesting that $k$ is well-suited for the task at hand. If the kernel is parametric, i.e.~$k=k_\lambda$, the set of parameters that maximizes alignment can lead to better generalization in classification tasks. Such a maximization process is again called kernel alignment.
We rely on a refined definition of kernel alignment, the \textit{centered alignment}~\cite{cortes2012algorithms}, for a reason that will be clear in Sec.~\ref{subsec:multiclass-qka}. Given two kernel matrices $K_t$ and $K$, where $K_{t,ij} = k_t(x_i, x_j)$ and $K_{ij} = k(x_i, x_j)$ for a set of data samples  $\{x_i\}_{i=1}^m$, the centered alignment $\mathcal{A}(K_t, K)$ is:

\begin{equation}\label{eq:kernel_alignment}
\mathcal{A}(K_t, K) := \frac{\langle K_t^c, K^c \rangle_F}{\|K_t^c\|_F \|K^c\|_F }
\end{equation}
where $K_t^c$ and $K^c$ are the centered versions of $K_t$ and $K$, as defined below, $\langle \cdot, \cdot \rangle_F$ denotes the Frobenius product, and $||\cdot||_F$ the Frobenius norm. Centering is performed by subtracting the mean of each row and column and adding the overall mean: $K^c := K - \mathbf{1} K / m - K \mathbf{1} / m + \mathbf{1} K \mathbf{1} / m^2$, where $\mathbf{1}$ is a matrix of ones. The centered alignment can be equivalently defined by centering one kernel matrix only, instead of both~\cite{cortes2012algorithms}.

\subsection{Fidelity quantum kernels}\label{subsec:fidelity-qk}

Quantum kernels have the potential to outperform classical kernels by leveraging the high-dimensional nature of quantum systems. In this work, we focus on fidelity quantum kernels~\cite{havlivcek2019supervised}, which quantify the similarity between quantum states to classify data. A key element of quantum kernels is the quantum feature map%
\footnote{In quantum literature~\cite{liu2021rigorous,slattery_numerical_2023}, the term quantum feature map is typically referred to $x \mapsto \ket{\Phi(x)}$. Note that such definition does not agree with that of the feature map we gave in the previous subsection, that verifies the PSD property. For fidelity kernels, the feature map to verify PSD is $\phi(x) := (\bra{0} U(x)^\dag) \otimes (U(x) \ket{0})$. Indeed, for such $\phi$, one can write $\langle \phi(x), \phi(x') \rangle = k(x,x')$, where the inner product is defined in the doubled space, and $k$ is that of Eq.~\eqref{eq:fqk}.}
that maps the classical data $x \in \mathcal{X}$ (e.g., $\mathcal{X} \subset \mathbb{R}^d$) to an $n$-qubit quantum feature state $\ket{\Phi(x)} = U(x) \ket{0^n}$ with $ U(x)$ being a parameterized circuit family. The quantum kernel, which measures the similarity between two data points, is then calculated using the fidelity between their corresponding quantum states as:
\begin{equation}\label{eq:fqk}
k(x, x') = \abs{\braket{0^n | U(x)^\dagger U(x') | 0^n} }^2.
\end{equation}
When $\mathcal{X}$ is a subgroup of a group $\mathcal G$, one can define a covariant feature map~\cite{glick_covariant_2024}, where each data point $x \in \mathcal{X}$ is mapped to a quantum unitary via a unitary representation \( D(x) \) of the group $\mathcal G$. The unitaries $D(x)$, for all $x$, are applied to a fixed state \( |\psi\rangle = V|0^n\rangle \) called fiducial state, thus producing the feature map circuit \( U(x) = D(x) V \).
If classes are distinct cosets of a subgroup $\mathcal S$ in $\mathcal G$, and the same property is preserved in terms of unitaries, then one may choose the fiducial state to be invariant under $D(\mathcal S)$, and the kernel classifies data perfectly~\cite{glick_covariant_2024}. In Subsec.~\ref{subsec:group} we better formalize this requirement of the group structure, and prove that the vice versa also holds: if a fidelity kernel equals the ideal kernel, then there exists a covariant group structure in the classes.

The effectiveness of covariant quantum kernels depends heavily on the choice of the fiducial state. If sufficient structural knowledge of the data is available, the invariant fiducial state can be selected a priori.
When working with real-world datasets, where the optimal embedding and fiducial state are not known in advance, selecting or optimizing them becomes a critical challenge. In such cases, aligning the kernel~\cite{cristianini2001kernel} with the structure of the data is essential for success. To optimize the fiducial state, a parametrized quantum circuit with parameters $\lambda$ is then implemented, namely \( |\psi_{\lambda}\rangle =  V_{\lambda}|0^n\rangle \). This results in a parameterized quantum kernel $k_{\lambda}(x, x')$ that adapts to the data structure. The kernel in Eq.~\eqref{eq:fqk} takes its final form
\begin{equation}\label{eq:fqk-align-classicaldata}
\begin{aligned}
    k_\lambda(x, x') &=& \abs{\braket{0^n | V_\lambda ^\dagger D(x)^\dagger D(x') V_\lambda | 0^n} }^2\\
    &=& \abs{ \braket{\psi_\lambda | D(x)^\dagger D(x') | \psi_\lambda} }^2.
\end{aligned}
\end{equation}

As a final remark, note that good performance of a kernel can be achieved in practice also when the kernel is not ideal, see Ref.~\cite{havlivcek2019supervised} for an example.

\section{Methods}\label{sec:methods}
\subsection{Multi-class quantum kernel alignment}\label{subsec:multiclass-qka}
\begin{figure*}[t]
\centering
\includegraphics[width=.9\linewidth]{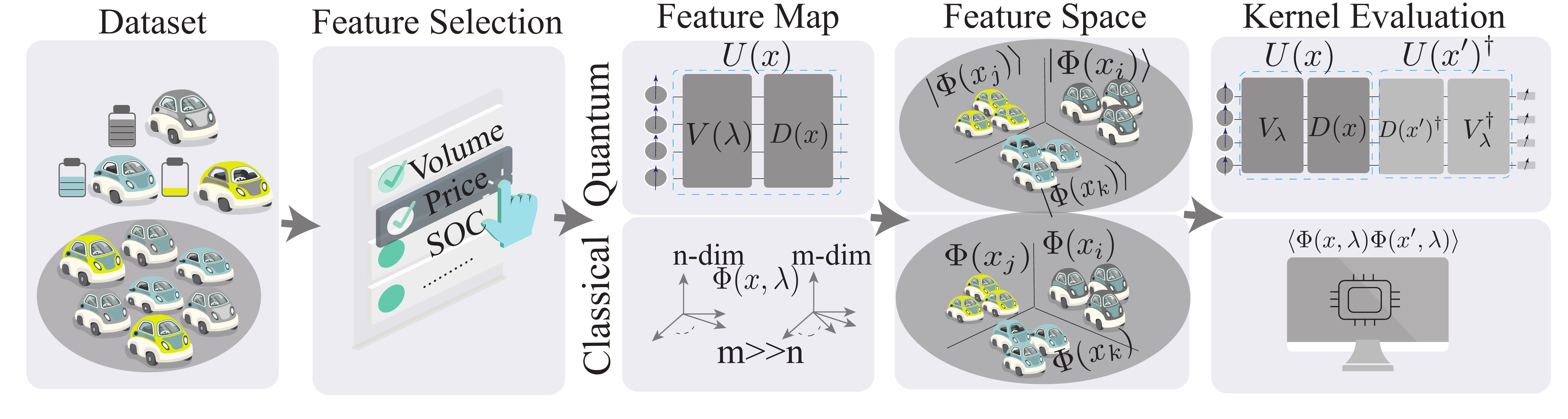} 
\caption {A schematic representation of workflow.  From the dataset related to charging electric vehicles the most important features are selected. The data is then mapped to a higher-dimensional feature space using both quantum and classical variational feature maps. Quantum and classical kernels are subsequently trained and evaluated (on quantum and classical hardware respectively) to assess their performance in classification. %
\label{fig1}}
\end{figure*}

Let us assemble the key elements introduced in the preliminaries, and describe the proposed quantum classification algorithm, also sketched in Fig.~\ref{fig1}. %
The vector $x_i$ is mapped to a quantum unitary through a family of variational circuits parametrized over the entries of $x_i$, for instance
$D(x_i) = \bigotimes_{k=1}^n R_X(x_i^{(k)}),$
where $n$ is the number of features (and qubits).
The kernel in Eq.~\eqref{eq:fqk-align-classicaldata} is then evaluated by running multiple shots of the quantum circuits. To fit the QSVC, thence, the kernel matrix is computed on the training set, equipped with corresponding labels, and then the training proceeds exactly as for a classical SVC.
To our knowledge, this is the first time QSVC is applied to multiple classes.

The kernel is aligned through an optimizer to the target kernel, meaning that $\lambda$ is chosen as to best match the ideal kernel. 
In the context of quantum kernels, the following $k_t$ is a natural choice for the target kernel, since all kernel values in Eq.~\eqref{eq:fqk-align-classicaldata} are nonnegative:
$$
k_t(x, x') =
\begin{cases}
    1 &\text{if $x, x'$ in same class},\\
    0 &\text{otherwise,}
\end{cases}
$$
Remarkably, since we use the centered version of kernel alignment in Eq.~\eqref{eq:kernel_alignment}, $k_t$ defined above is equivalent to the following other ideal kernel $k'_t$, commonly used in multi-class classical machine learning~\cite{Camargo2009}:
$$
k'_t(x, x') =
\begin{cases}
    1 &\text{if $x, x'$ in same class},\\
    -\frac{1}{C-1} &\text{otherwise,}
\end{cases}
$$
where $C$ is the number of classes. The equivalence between $k_t$ and $k'_t$ in terms of centered kernel alignment is proven in Prop.~\ref{prop:app:kernels} of the Appendix, by leveraging the invariance of centered alignment under affinities.

For the circuit design, we follow the approach from Ref.~\cite{glick_covariant_2024}. As shown in Fig.~\ref{fig:featuremap}, the feature map includes a fiducial state preparation layer and a data embedding layer. The entangler is designed to maximize the number of two-qubit gates while maintaining minimal circuit depth to reduce noise effects. It is constructed in an iterative fashion. First, we select a connected subgraph of the coupling map with the same number of qubits as the problem size. Next, we fix a starting qubit in the subgraph and we build a spanning tree by applying breadth-first search. The previous step is iterated over all qubits. Finally, we choose the tree with the smallest depth, and construct an entangler circuit by putting hardware-native two-qubit gates on the edges of the selected tree.
We map the features by assigning the most important one to the center qubit, then alternately placing the remaining features symmetrically around it. The resulting circuit has a number of trainable parameters equal to three times the number of qubits.

\begin{figure}[t]
\centering
\includegraphics[width=\linewidth]{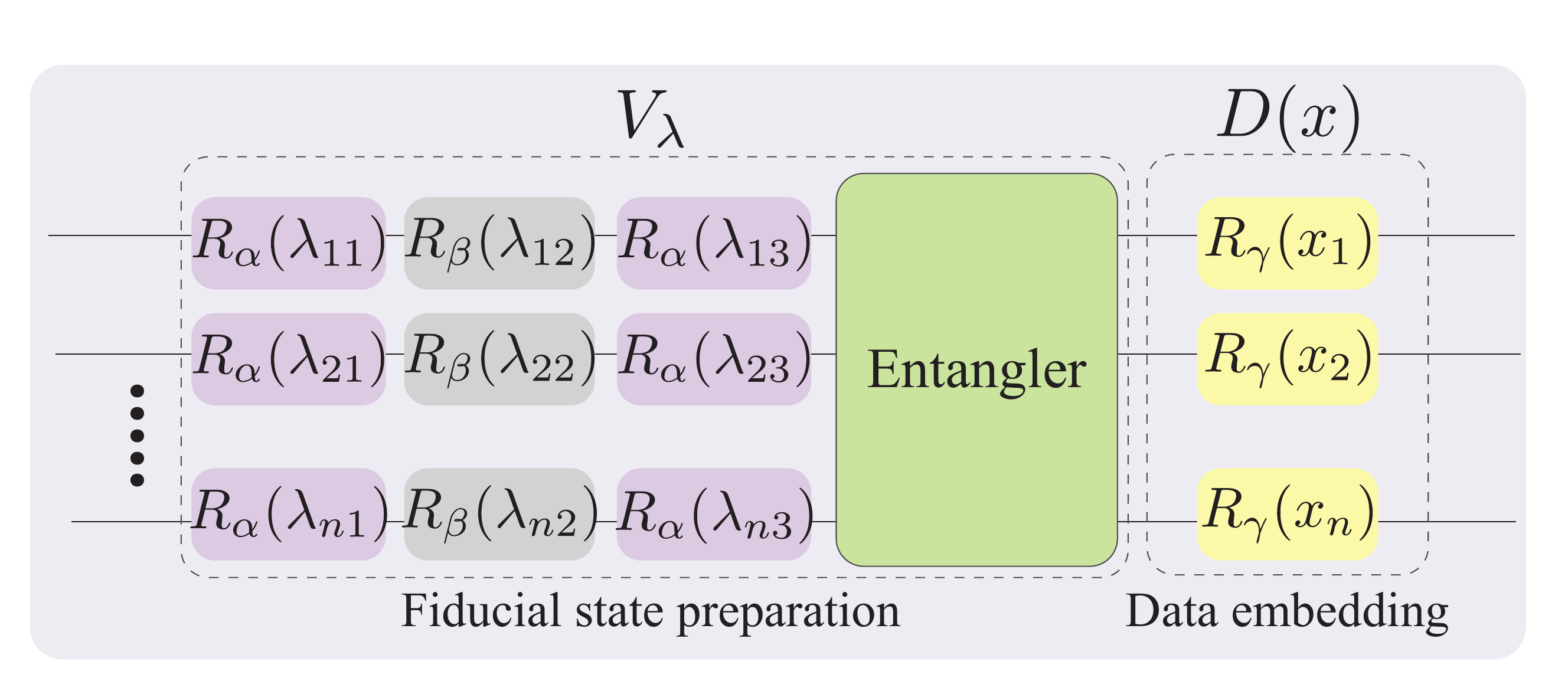} 
\caption { 
Feature map used for QSVC. The feature map $U (x)$ is defined as 
the multiplication of the unitary representation $D(x)$ applied for embedding and a circuit $V_{\lambda}$ that prepares the fiducial state. The entangler is designed  such that it maximizes the number of two-qubit gates while maintaining minimal circuit depth and taking into account the coupling map of the hardware. For experiment on \texttt{ibm\_brussels} the parameters are set as $\alpha=X, \beta=Y, \gamma=Z$. For experiment on \texttt{ibm\_fez} and \texttt{ibm\_marrakesh} the parameters are $\alpha=Z, \beta=Y, \gamma=X$. The variation in gates arises from differences in the gate alphabet implemented on the hardware.
\label{fig:featuremap}}
\end{figure}

\subsection{Noise and concentration: bit-flip tolerance}
As we scale up our experiments, implementing effective mitigation schemes is crucial. We employ dynamical decoupling~\cite{PhysRevApplied.20.064027} as implemented in Qiskit~\cite{qiskit2024} to counteract noise using an $XpXm$ pulse sequence. Additionally, we use measurement twirling~\cite{PhysRevA.105.032620} to mitigate readout biases by averaging out coherent errors during the measurement process. 

Even with these mitigation techniques, our preliminary experiments reveal a significant issue: when the number of qubits scales beyond $60$, the state $\ket{0^n}$ is never measured as an outcome of the circuit $V_\lambda^\dag D(x)^\dag D(x') V_\lambda \ket{0^n}$ in Eq.~\eqref{eq:bitflip-tol-kernel}, for $x \neq x'$, even after thousands of shots. %
The presence of residual unmitigated noise sources should be pointed out, prominently readout errors, whose mitigation with M3~\cite{nation2021scalable} would be excessively resource-intensive for large-scale problems. \rev{At the same time, noise is known to be highly detrimental for the performance of kernels, as it constitutes one of the sources of exponential concentration~\cite{thanasilp2024exponential}.}
\rev{A CVaR-based mitigation technique for fidelities, with provable bounds, has been proposed recently and applied to fidelity kernels on top of M3~\cite{barron_provable_2024}. It was shown to be effective in the error mitigation of fidelity quantum kernels (specifically, using ZZFeature map) up to 100 qubits and QAOA of up to 127 qubits, albeit with notable sampling overhead and processing time due to M3 error mitigation preprocessing.}

\rev{In this work,} we introduce a \rev{pragmatic} heuristic, called \textit{bit-flip tolerance} (BFT), that modifies the kernel computation from the standard expression in Eq.~\eqref{eq:fqk-align-classicaldata} to

\begin{equation}\label{eq:bitflip-tol-kernel}
    k^d_\lambda(x,x')=\sum_{\norm{j}_H \leq d}|\langle j|V_\lambda^{\dagger} D^{\dagger}(x) D(x') V_\lambda|0^n\rangle|^2
\end{equation}
where $\norm{j}_H$ is the Hamming weight of $j$ (namely, the number of ones in the bitstring representing $j$), and $d$ is the maximal tolerance. In another word, $d$ is the number of bits allowed to be flipped, compared to the ideal state $|0^n\rangle$ measured in Eq.~\eqref{eq:fqk-align-classicaldata}. The core idea is that minor deviations from the expected all-zero outcome, such as a few bits flipping to one, are likely attributable to noise rather than a true quantum effects. By reclassifying these near-zero strings as zeros, our approach aims to correct errors, thereby enhancing the robustness of the kernel computation.

However, this approach has its limitations. It inherently assumes that small deviations are always due to noise, which might not hold true in all situations. This assumption could lead to inaccuracies if the deviations contain meaningful information or if the errors are systematic or correlated rather than random. Additionally, a significant concern is that the resulting kernel function may not be positive semi-definite (PSD). To ensure PSD, we first have to check symmetry. In principle, it would be possible to define $k_S^d(x,x') := \frac{1}{2} k^d(x,x') + \frac{1}{2} k^d(x',x)$, where we dropped subscript $\lambda$ from Eq.~\eqref{eq:bitflip-tol-kernel}, but this approach nearly doubles the number of circuits that need to be evaluated.  Rather, we calculate the entries of the kernel matrix by $K^d_{ij} = K^d_{ji} = k^d(x_{\min \{i,j\}}, x_{\max \{i,j\}})$. Note that our definition depends on the sorting of data points, which is arbitrary. Now that symmetry is ensured, the kernel matrix may still not be positive semi-definite (PSD). To mitigate this issue, we project the non-PSD matrix onto the nearest PSD matrix. We also examine the relationship between the choice of $d$ and the distance of the matrix from the closest PSD, finding that larger $d$ values don't improve the positive semidefinite property of the matrix after a certain threshold.

It is important to note that this method is designed for fidelity quantum kernels and may not be generalizable. Additionally, the method tuning and the tests conducted in this manuscript are problem-specific, and require a systematic analysis of a wider spectrum of cases before asserting general applicability to quantum kernels. Nonetheless, to the best of our knowledge, this is the first proposed mitigation technique that specifically targets quantum kernels.

\subsection{Group structure of the embeddings in covariant kernels}\label{subsec:group}

In Subsec.~\ref{subsec:fidelity-qk} we recalled that the fiducial state $\ket\psi$ of a covariant kernel must be invariant under a subgroup of the unitary circuits. Hence, we find it convenient to formalize covariant quantum kernels as a tool to classify quantum unitaries, distancing ourselves from the common presentation of the topic.

Consider two classes of quantum operators $\mathcal{U}^+$ and $\mathcal{U}^-$, where $\mathcal{U}^\pm \subset U(2)^{\otimes n}$. Let $\ket\psi$ denote a fixed state-vector on $n$ qubits. For any $D, D' \in \mathcal{U} = \mathcal{U}^+ \cup \mathcal{U}^-$, the fidelity quantum kernel is defined as:
\begin{equation}\label{eq:fqk-operators}
k(D, D') = \abs{\braket{ \psi | D^\dag D' | \psi }}^2.
\end{equation}
Let us then define when a kernel is covariant with two classes of unitaries:
\begin{definition}\label{defn:group-structure} 
Let $\mathcal{U}^+$ and $\mathcal{U}^-$ be two classes in the space of quantum unitaries, namely two subsets of $U(2)^{\otimes n}$.
We say that the kernel in Eq.~\eqref{eq:fqk-operators} is covariant with the classes $\mathcal U^\pm$ of unitaries if 
there are a subgroup $\mathcal{S} < U(2)^{\otimes n}$ and two elements $C_\pm \in U(2)^{\otimes n}$ such that:
\begin{enumerate}
\item $\mathcal{U}^\pm$ are subsets of the cosets defined by $C_\pm$, namely $\mathcal{U}^\pm \subset C_\pm \mathcal{S}$,
\item $\ket\psi$ is $\mathcal{S}$-invariant modulo global phases, namely for all $S \in \mathcal{S}$ there exists one $\theta_S \in [0,2\pi)$ such that $S \ket\psi = e^{i \theta_S} \ket\psi$, and
\item $C_+ \ket\psi$ is orthogonal to $C_- \ket\psi$, namely $\braket{\psi | C_+^\dag C_- | \psi} = 0$.
\end{enumerate}
\end{definition}
The classification of the points $\{x_i\}$ can be translated into the classification of the embedding unitaries~$\{D(x_i)\}$, and the theory herein explained applies. In such case, the three properties in the definition are required in the space of the unitaries, and not necessarily in the original data. In other words, the needed structure is a characteristic of the embedding chosen for the data, which in turn must be suited to expose some structure of the classical data themselves. On the other side, following Ref.~\cite{glick_covariant_2024}, if the classical data belongs to a group $\mathcal G$, the first property of the previous definition may be inherited from a homomorphism from $\mathcal G$ to $U(2)^{\otimes n}$.
Still, the second and third properties must be verified in the space of quantum unitaries.

In contrast to the original work~\cite{glick_covariant_2024}, our Definition~\ref{defn:group-structure} relaxes the requirement of the $\mathcal{S}$-invariance of $\ket\psi$ by only necessitating it up to global phases.
Now, %
let
\begin{equation}\label{eq:deltaclass}
\delta^\mathrm{class}_{D, D'} :=
\begin{cases}
    1 &\text{if $D, D'$ in same class},\\
    0 &\text{otherwise.}
\end{cases}
\end{equation}
Under our Definition~\ref{defn:group-structure}, the following Proposition states that the condition $k = \delta^\mathrm{class}$ is actually \textit{equivalent to} the kernel being covariant with the unitary classes (and not only \textit{implied by} the kernel being covariant, as shown in Ref.~\cite{glick_covariant_2024}).

\begin{proposition}\label{prop:group}
Let $\mathcal{U}^+$ and $\mathcal{U}^-$ be two subsets of $U(2)^{\otimes n}$. A kernel $k$ is covariant with $\mathcal U^\pm$ if and only if $k(D, D') = \delta^\mathrm{class}_{D, D'}$ for all $D, D' \in \mathcal{U}^+ \cup \mathcal{U}^-$.
\end{proposition}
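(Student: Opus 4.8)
The plan is to prove the two implications of the biconditional separately. The forward direction (covariance $\Rightarrow k = \delta^{\mathrm{class}}$) is essentially the computation of Ref.~\cite{glick_covariant_2024}, adapted to the phase-relaxed Definition~\ref{defn:group-structure}, so I treat it quickly and reserve the real work for the converse. Assuming $k$ is covariant with data $\mathcal{S}, C_\pm$, for $D, D'$ in the same class (say both in $\mathcal{U}^+$) I write $D = C_+ S$, $D' = C_+ S'$ with $S, S' \in \mathcal{S}$ by property 1. Then $D^\dagger D' = S^{-1} S' \in \mathcal{S}$ since $\mathcal{S}$ is a subgroup, and property 2 gives $(S^{-1}S')\ket\psi = e^{i\theta}\ket\psi$, so $|\braket{\psi|D^\dagger D'|\psi}|^2 = 1$. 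For $D \in \mathcal{U}^+$, $D' \in \mathcal{U}^-$ I write $D = C_+ S$, $D' = C_- S'$ and use property 2 to peel $S, S'$ off as phases on the bra and ket, reducing $\braket{\psi|D^\dagger D'|\psi}$ to a unit-modulus multiple of $\braket{\psi|C_+^\dagger C_-|\psi}$, which vanishes by property 3. The phase relaxation costs nothing here, since a global phase does not affect the modulus.

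For the converse, assume $k(D,D') = \delta^{\mathrm{class}}_{D,D'}$ for all $D, D'$ and build the covariant structure explicitly. The key object is the phase-stabilizer of the fiducial state,
\[
\mathcal{S}_\psi := \{\, S \in U(2)^{\otimes n} : S\ket\psi = e^{i\theta}\ket\psi \text{ for some } \theta \in [0,2\pi)\,\},
\]
which is immediately a subgroup (it contains the identity and is closed under products and inverses) and satisfies property 2 of Definition~\ref{defn:group-structure} by construction. I fix representatives $C_+ \in \mathcal{U}^+$, $C_- \in \mathcal{U}^-$; if a class is empty I instead pick any unitary sending $\ket\psi$ to a vector orthogonal to $C_\mp\ket\psi$, which exists because the Hilbert space has dimension at least $2$, and the corresponding coset condition is then vacuous. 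The crux is the containment $\mathcal{U}^\pm \subset C_\pm \mathcal{S}_\psi$: for $D \in \mathcal{U}^+$ the hypothesis gives $|\braket{C_+\psi|D\psi}|^2 = k(C_+, D) = 1$, and since $C_+\ket\psi$ and $D\ket\psi$ are unit vectors the equality case of Cauchy--Schwarz forces $D\ket\psi = e^{i\theta}C_+\ket\psi$, i.e.\ $C_+^{-1}D \in \mathcal{S}_\psi$, whence $D \in C_+\mathcal{S}_\psi$; the identical argument yields $\mathcal{U}^- \subset C_-\mathcal{S}_\psi$. Finally property 3 is exactly $k(C_+, C_-) = 0$, which holds because $C_+$ and $C_-$ lie in different classes. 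Setting $\mathcal{S} = \mathcal{S}_\psi$ thus verifies all three conditions.

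The only genuinely new content is the converse, and there the decisive step is guessing the right subgroup: taking $\mathcal{S}$ to be the \emph{full} phase-stabilizer $\mathcal{S}_\psi$ is what makes the group axioms automatic and lets the Cauchy--Schwarz equality case carry the entire argument. The remaining care is bookkeeping: ensuring that the relaxed invariance in property 2 is precisely what the equality case delivers (up to a global phase), that the cross-class orthogonality feeding property 3 is just a single instance of the hypothesis, and that the degenerate empty-class case is handled. I expect no serious analytic obstacle once the stabilizer construction is in place.
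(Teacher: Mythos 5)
Your proof is correct. The forward direction is the same computation as the paper's: peel off the coset representatives via property~1, absorb the subgroup elements into global phases via property~2, and invoke property~3 for the cross-class case; the phase relaxation is indeed harmless under the modulus.

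For the converse your construction differs from the paper's in the choice of $\mathcal{S}$, and the two choices are dual in a pleasant way. The paper fixes $C_j \in \mathcal{U}_j$ and takes $\mathcal{S}$ to be the subgroup \emph{generated by} $\bigcup_j C_j^\dag \mathcal{U}_j$; then property~1 holds by construction, and property~2 must be verified: the hypothesis $k=1$ within a class forces $C_j^\dag D\ket\psi = e^{i\theta}\ket\psi$ on each generator, and phase-invariance propagates to products and inverses. You instead take $\mathcal{S}$ to be the \emph{maximal} admissible subgroup, the full phase-stabilizer $\mathcal{S}_\psi$; then property~2 is automatic and property~1 is what must be verified, via the same Cauchy--Schwarz equality step. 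The decisive fact is identical in both routes ($\lvert\braket{\psi|C_+^\dag D|\psi}\rvert = 1$ for unit vectors forces $C_+^\dag D \in \mathcal{S}_\psi$), so the arguments are equivalent in substance; your version saves the small step of extending invariance from generators to the whole group, at the cost of producing a larger (non-canonical from the data, but perfectly valid) stabilizer. Your handling of an empty class is a point the paper silently ignores; just note that the existence of a suitable $C_\pm$ inside $U(2)^{\otimes n}$ (read as local unitaries rather than the full unitary group) deserves a one-line justification rather than an appeal to the Hilbert-space dimension alone, though this degenerate case is peripheral to the result.
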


Definition~\ref{defn:group-structure} and Proposition~\ref{prop:group} straightforwardly extend to the multi-class case, see Appendix~\ref{appsub:covariant} where also the proof is given.

Proposition~\ref{prop:group} has significant implications: if the fidelity kernel matches the ideal kernel, it indicates the existence of the desired covariant structure in the unitaries, even if no structure is known in the original data. We introduce the term \emph{hidden} covariant structure in this case.
If no structure is known a priori, the fiducial state can be (approximately) identified~\cite{glick_covariant_2024} by resorting to the alignment of a parametric kernel as in Eq.~\eqref{eq:fqk-align-classicaldata}.

As introduced in Fig.~\ref{fig:featuremap}, we use Pauli rotations along one axis $\gamma$ ($\gamma \in \{X,Y,Z\}$) as the embedding layer. In this case, the structure of covariant kernels becomes very easy to analyze. Focusing for simplicity on a single qubit and mimicking the reasoning of Ref.~\cite{glick_covariant_2024}, one may take the subgroup $S := \{ R_\gamma(s \, \theta) \}_{s \in \mathbb Z}$, a subgroup of the Pauli rotations, where $\theta$ is a fixed angle (submultiple of $2 \pi$), and the coset generators $C_c := R_\gamma(\delta_c)$ for each class $c$. Additionally, $\delta_{c_1}-\delta_{c_2}$ should not be a multiple of $\theta$ for all classes $c_1 \neq c_2$, in order to ensure cosets are distinct. Therefore each data point $x$ would be of the form $x = s(x) \, \theta + \delta_{\mathrm{class}(x)}$, for some integer $s(x)$. 

\subsection{Synthetic data from the union of subspaces}\label{subsec:union-subspaces}

In our real-world dataset, increasing the number of features does not consistently enhance classification accuracy, particularly as we approach utility scale, where the number of features is 100+.  Therefore, to evaluate the robustness of the bit-flip tolerance technique at this scale, we turn to synthetic datasets. Since our primary focus is testing quantum kernels in real-world scenarios, we generate synthetic data based on the union of subspaces~\cite{vidal2011subspace, elhamifar2013sparse}. Indeed many practical datasets, including images, videos, and time series, exhibit properties consistent with the union of subspaces \cite{ elhamifar2013sparse, BahadoriKFL15, Khodadadzadeh7120510}. 

Data coming from a union of subspaces has an interesting inner product property, often exploited in tasks like clustering and classification \cite{Khodadadzadeh7120510, heckel2013subspace}. When data lies in independent subspaces %
of a given space $\mathbb R^n$, on average two points within the same subspace will have a higher inner product with each other compared to points from different subspaces. We apply the property to the case where the classes identify with the subspaces, and hence said property writes:%

\begin{equation}\label{eq:subspace-inner-classical}
\begin{aligned}
\mathbb{E}_{\mathrm{class}(x) \neq \mathrm{class}(x')}\left[ \left| \inner{x}{x'} \right|^2 \right] \\
< \mathbb{E}_{\mathrm{class}(x) = \mathrm{class}(x')}\left[ \left| \inner{x}{x'} \right|^2 \right].
\end{aligned}
\end{equation}
where $\mathbb{E}$ indicates the expectation value over all data points. 
The inequality holds, for instance, when $x$ and $x'$ are sampled uniformly from the unitary hypersphere and subspaces are independent, see Prop.~\ref{prop:subspace-classical} in Appendix~\ref{appsub:subspaces}.%

Moving to quantum kernels, %
it is required that the embedding preserves inequality~\eqref{eq:subspace-inner-classical}. More precisely, the property we are aiming for, is:

\begin{equation}\label{eq:subspace-inner}
\begin{aligned}
\mathbb{E}_{\mathrm{class}(x) \neq \mathrm{class}(x')}\left[ \left| \braket{\psi | D(x)^\dagger D(x') | \psi} \right|^2 \right] \\
< \mathbb{E}_{\mathrm{class}(x) = \mathrm{class}(x')}\left[ \left| \braket{\psi | D(x)^\dagger D(x') | \psi} \right|^2 \right],
\end{aligned}
\end{equation}
namely

\begin{equation*}
    \mathbb{E}_{\mathrm{class}(x) \neq \mathrm{class}(x')}\left[ k(x, x') \right] < \mathbb{E}_{\mathrm{class}(x) = \mathrm{class}(x')}\left[ k(x, x') \right].
\end{equation*}
Note that such condition constitutes a relaxed version of the covariant characterization $k(D(x), D(x')) = \delta^\mathrm{class}_{D(x), D(x')}$. The validity of Eq.~\eqref{eq:subspace-inner} depends on the embedding $D(\cdot)$ chosen for the data points. Indeed, take for instance $\ket\psi = \ket{0^n}$, and $D(x)$ to be a set of $Z$-rotations parametrized over the components of the data point $x$. In this case, it is easy to see that $k$ is identically~$1$ and the inequality does not hold. On the other hand, in Appendix~\ref{appsub:subspaces-quantum}, we show numerically that the inequality holds under very strong assumptions, namely when $\ket\psi = \ket{0^n}$, the encoding is made of $X$-rotations (or $Y$-rotations), the subspaces are independent, and the samples are uniformly drawn from the $2\pi$-hypersphere. This observation provides some confidence that fidelity kernels may have good performance on data generated from the union of subspaces. 

This relationship between quantum kernels and data coming from a union of subspaces is novel and relevant because it is known that clustering/multiclass classification of data which has this underlying structure is known to have instances which are computationally hard \cite{pesce2022subspace}.

\section{Application: Vehicle to Grid (V2G)}\label{sec:application}

A real-world application of multi-class QSVC is the Vehicle to Grid (V2G) problem as we discussed in Ref.~\cite{agliardi2024machine}. In short, the V2G problem consists of using electric vehicles as a Battery Energy Storage Solution (BESS) to buy and sell energy from the power grid. The optimization task is to decide whether an EV should charge or discharge at a given time, given factors such as the energy prices, the forecast energy demand, and the desired state of charge of electric vehicles (e.g., 60~kWh at 7~am). The problem is challenging due to its combinatorial nature and because charging and discharging decisions must be made online due to the time-varying system conditions. 

To make scheduling decisions fast, in our previous work~\cite{agliardi2024machine} we used a learning approach that combines offline simulation and online decision-making. In particular, we proposed to generate good candidate scheduling policies with approximate programming and then train a machine learning model to select a policy given the live information in the system. One way to make predictions is to use kernel methods. %

In more detail, in that approach, we designed a very fast approximate solver relying on a classifier (three class classification) that predicts the best action to be performed at a given time, among the following options: \texttt{C} (charge as many vehicles as feasible), \texttt{I} (keep all vehicles idle), \texttt{D} (discharge as many vehicles as feasible). Thus the problem is a three-class classification problem, where input features represent the state of the system of electric vehicles at time $k$: energy price, target volume to exchange with the grid, statistics of state of charge of vehicles (mean, variance, and higher moments, quantiles), state of constraints. %
Labels (\texttt{C}, \texttt{I}, or \texttt{D}) are assigned ahead of time by running an offline approximate solver, such as an approximate Dynamic Programming method or a time-bound execution of a commercial MILP solver.

\begin{figure}[t]
\centering
\includegraphics{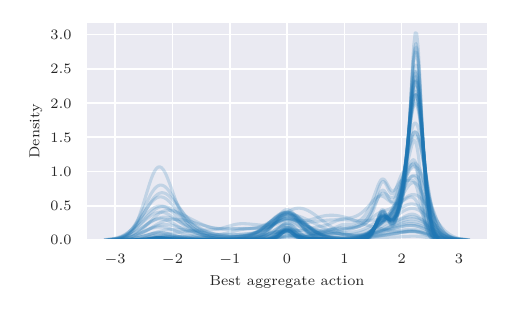}
\caption {
Density of the next best aggregate action grouped by time-to-go.
\label{fig:density}}
\end{figure}

Fig.~\ref{fig:density} shows that the best amount a fleet should be charged at a given time (the `next best aggregate action'), is a continuous quantity in nature, due to constraints. The figure indeed presents the density distribution of the next best aggregate action across different time horizons, normalized by the number of vehicles in each sample.
Despite the continuous domain, actions concentrate around three distinct peaks, corresponding to our three classes \texttt{C}, \texttt{I}, \texttt{D}~\cite{agliardi2024machine}. %

\section{Experiments}\label{sec:results}

In the earlier sections, we introduced the problem and provided an overview of the dataset, building on our previous work~\cite{agliardi2024machine}.

The dataset introduced in Sec.~\ref{sec:application} consists of $229{,}344$ samples and $178$ features. We run our experiments with a reduced set of features (of varying size), \rev{to adapt to the size of available quantum computers}. For feature selection, we sample $100$ random datasets and train random forest models, averaging the feature importance across runs. The top features align well with what a machine learning engineer might select. Although PCA could be an alternate way to reduce the dimensionality, we opt to preserve the original features for interpretability.

\begin{figure*}[t]
\centering
\includegraphics{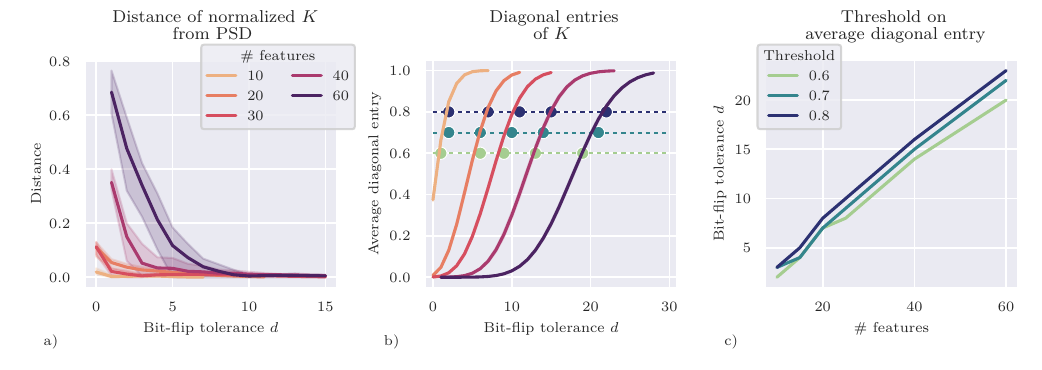}
\caption {
Hamming calibration on quantum hardware, with real data (average and $95\%$-confidence over 3 instances). Confidence area is imperceptible in b) and c). Color coding and legends apply across the various subplots. 
\textbf{a)} Distance of the normalized kernel matrix to the closer normalized PSD matrix, as a function of $d$.
\textbf{b)} Average diagonal entry in $K$. In absence of noise, this value should be $1$ independently of $d$. Dashed lines represent fixed thresholds for the average diagonal, and dots visualize the minimum number of bit-flips needed to satisfy such threshold.
\textbf{c)} Bit-flip tolerance $d$ required to satisfy a given threshold on the average diagonal entry of the kernel matrix.
\label{fig:ham-cal}}
\end{figure*}

\subsection{Calibration of bit-flip tolerance}\label{subsec:hamming-calibration}

As a first step towards hardware experiments, we need to estimate the value of $d$, which represents the maximal number of bits allowed to flip. Empirically, small values of~$d$ \rev{yield little deviation from standard fidelity kernels, and therefore lead to more accurate kernel evaluations, but are not sufficient to mitigate exponential concentration in presence of hardware noise. Conversely, larger values of~$d$ result in higher deviation and reduced kernel evaluation accuracy}. Finding an optimal value for~$d$ can be challenging and requires calibration experiments. 
For the calibration experiments we sample three random datasets, each with 15 samples and the number of features $[10, 20, 30, 40, 60]$. The selected dataset size for hardware runs is small, since the number of circuits needed for a single kernel evaluation scales quadratically with the number of samples. We evaluate the kernel matrices on the \texttt{ibm\_brussels} device, varying the bit-flip tolerance. In the variational form of Fig.~\ref{fig:featuremap}, we take $\alpha=X, \beta=Y$ and $\gamma=Z$. Experiments are conducted with $10{,}000$ shots. Dynamical decoupling~\cite{PhysRevApplied.20.064027} and gate twirling~\cite{PhysRevA.105.032620} were enabled.

As a first, useful observation for calibration, kernel matrices sampled from quantum computers may not always be PSD. Indeed, for $d=0$, it is known that in absence of noise and with infinite shots, the fidelity kernel is PSD, see Subsec.~\ref{subsec:fidelity-qk}. Despite this, it is easy to show empirically that in presence of noise and with finite measurements, the property does not always hold. On the other side, when $d$ equals the number of qubits, the kernel matrix is filled with ones, and it is again PSD. It is then important to assess the behavior for intermediate values of $d$ on noisy hardware. In Fig.~\ref{fig:ham-cal}~a), we show the distance of the normalized kernel matrix to the nearest normalized PSD matrix, obtained by clipping negative eigenvalues. The plot indicates that increasing the bit-flip tolerance from $0$ to a certain point leads to a significant reduction in this distance. Beyond this point, the distance starts flattening. This point can therefore be considered as a lower bound for selecting $d$.

The next, simple metric we examine is the average of the diagonal entries of the kernel matrices, which should ideally be equal to $1$. However, in the presence of noise and due to limitations of quantum kernels~\cite{thanasilp2024exponential}, diagonal entries deviate from $1$ in practice. Fig.~\ref{fig:ham-cal}~b) shows this metric, and it becomes clear that after a certain point, increasing the bit-flip tolerance does not significantly improve the diagonal values. This point can be considered an upper bound for choosing $d$. As witnessed by the imperceptibility of the confidence area in the plot, such point is independent of the problem instance. In fact, it is only a consequence of the hardware noise on fixed-structure circuits equivalent to the identity.

Again in the plot Fig.~\ref{fig:ham-cal}~b), we can fix a threshold (dashed line) and examine at which bit-flip tolerance the diagonal entries of the kernel matrices exceed this value, on average. The same data are more clearly visualized in the plot c) of same figure. Interestingly, we observe a linear dependency between the number of features and bit-flip tolerance for a fixed threshold. This discovery opens up a new direction for research, suggesting a link between number of feature and noise tolerance in quantum kernel evaluations.

\subsection{QKA on real data}\label{subsec:res-qka}
Now, let us evaluate how centralized quantum kernel alignment performs on the same real data. We start with noiseless simulations to then move to quantum hardware.

\begin{figure}[t]
\centering
\includegraphics{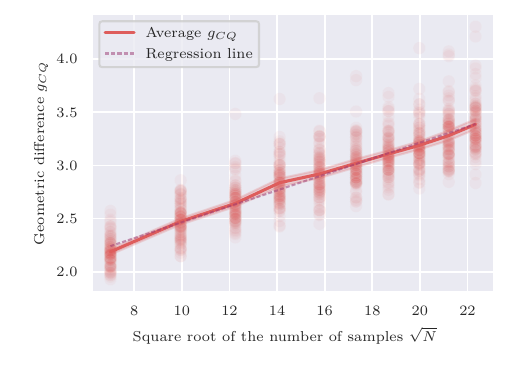}
\caption {
Geometric difference between the radial basis function (RBF) kernel and the aligned quantum kernel. Average and 95\% confidence interval over 100 instances for each $N$. Circles represent individual instances.
\label{fig:gcq}}
\end{figure}

As a preliminary test, we compute the geometric difference $g_{CQ}$~\cite{huang2021power} between radial basis function (RBF) kernels and the quantum kernels after alignment, see Fig.~\ref{fig:gcq}. Since RBF kernels are defined over a parameter $\gamma$, we choose the value of $\gamma$ minimizing $g_{CQ}$. The linear scaling of $g_{CQ}$ with the square root $\sqrt{N}$ of the number of train samples suggests potential quantum advantage~\cite{huang2021power}, motivating further experiments.

\begin{figure*}[t]
\centering
\includegraphics{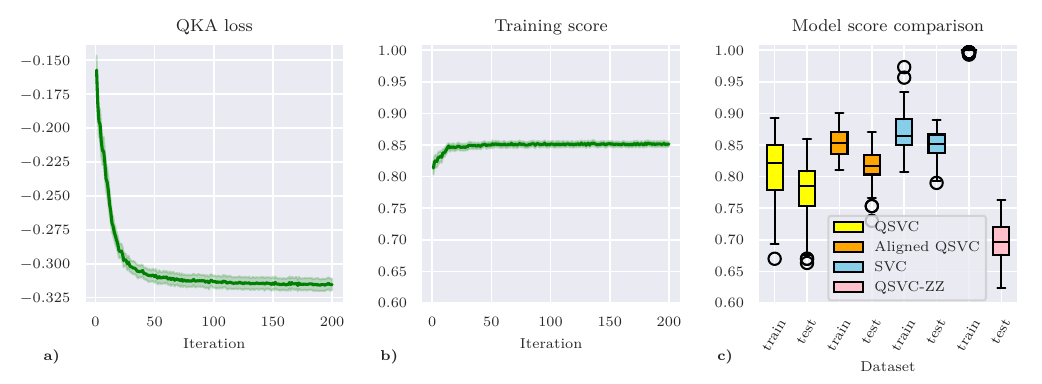}
\caption {
Simulation of QKA with real data, on 10 qubits and 600 samples. \textbf{a)}~Loss function during kernel alignment (average and $95\%$-confidence over 100 instances). \textbf{b)}~Train score during kernel alignment (average and $95\%$-confidence over 100 instances). The vertical axis is aligned with the next subplot for consistency. \textbf{c)}~Train and test scores of QKA with our feature map (marked as `QSVC' before alignment and as `Aligned QSVC' after alignment), compared to a classical SVC and to a QSVC with ZZ feature map (all box plots over 100 instances).
\label{fig:sim-result}}
\end{figure*}

\begin{figure*}[t]
\centering
\includegraphics{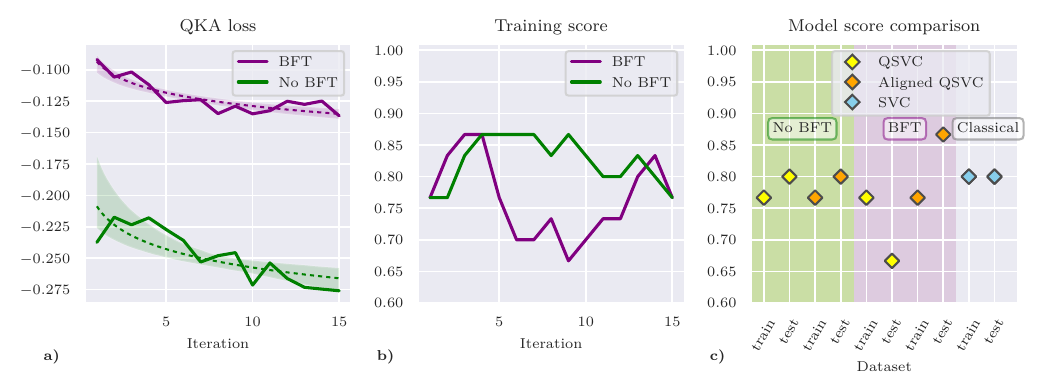}
\caption {
Hardware run of QKA with real data with and without bit-flip tolerance (BFT with $d=2$), on 10 qubits and 30 samples. Results with or without BFT are comparable at this size. \textbf{a)}~Loss function during kernel alignment (single instance), and logarithmic best-fit with $90\%$-confidence area. \textbf{b)}~Loss function during kernel alignment (single instance).  \textbf{c)}~Train and test scores of QKA compared to a classical SVC (single instance).
\label{fig:hardware-result}}
\end{figure*}

\begin{figure}[t]
\centering
\includegraphics{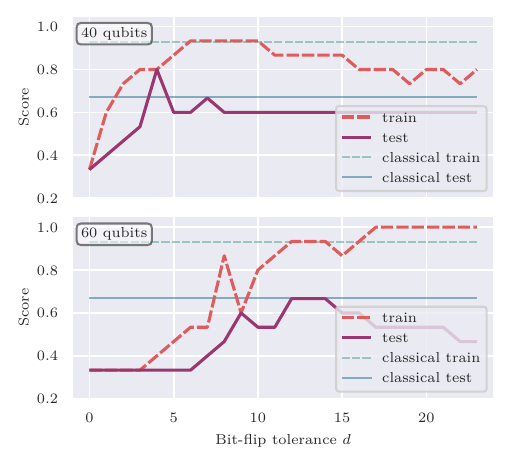}
\caption {
Train and test scores at different bit-flip tolerance levels, for hardware experiments with 40 and 60 qubits, compared to classical benchmarks. At this scale, the bit-flip tolerance method provides an improvement. The optimal tolerance~$d$ agrees with our calibration procedure on hardware.
\label{fig:hamming-scores}
}
\end{figure}

For the simulated results, we limit the setup to the $10$ most important features and, consequently, to circuits of $10$ qubits. We also set the bit-flip tolerance~$d$ to~$0$. This small configuration allows us to gather sufficient statistics while keeping the problem size reasonable from a machine learning perspective. We sub-sample $100$ random datasets, each containing $600$ samples, equally split into training and testing sets. Our QKA models are trained for $200$ iterations using SPSA as an optimizer. As a baseline, we use SVC from scikit-learn and apply a grid search of hyperparameters for a fair comparison. The results in Fig.~\ref{fig:sim-result} indicate that the quantum models perform nearly as well as classical SVC. To justify our choice of ansatz, we conducted several test. For example, let us remark that QSVC achieves far lower test scores with the commonly used ZZ feature map \cite{havlivcek2019supervised}, as shown in the same Fig.~\ref{fig:sim-result}~c).

As an additional benchmark, Projected Quantum Kernels~(PQK)~\cite{huang2021power} also provide results comparable with our aligned QSVC and with SVC. Specifically, the 1D Heisenberg feature map with random couplings~\cite{huang2021power} is applied, and projected back into the classical domain with $X$, $Y$, and $Z$ measurements. A classical RBF kernel is applied to such transformed features. We test $100$ datasets on simulator, each containing $600$ samples, equally split into training and testing sets. The train accuracy is $88\%$ and the test accuracy is $84\%$. 

Let us now move to the results on quantum devices. The datasets used in the simulation experiments are too large for current quantum hardware, due to the quadratic scaling mentioned earlier, as well as to the multiple iterations required by the kernel alignment optimization. Thus, for the hardware experiments, we sample a smaller dataset similar to the calibration experiments, limiting the number of features to~$10$ and reducing the dataset size to $30$ samples, equally split between training and testing data. The dataset is drawn from a pool where quantum models perform comparably well with the classical SVC, on a simulator. While this dataset size holds limited value from a machine learning perspective, the purpose here is to demonstrate the model trainability with bit-flip tolerance enabled. We acknowledge that the accuracy metrics may not generalize to larger models. Based on our calibration study, we set the bit-flip tolerance to $3$. The number of SPSA iterations is $15$. The remaining hardware setup mirrors the calibration experiments, including the presence of dynamical decoupling and measurement twirling. Qiskit reports the depth of the transpiled circuit as $23$. In Fig.~\ref{fig:hardware-result} we present two scenarios of quantum kernel alignment with BFT enabled (with tolerance $d=2$) and disabled. In both cases, the loss function follows the expected downward trend. The train score fluctuates more with BFT enabled but remains flat on average across iterations. Both models show performance comparable to the classical SVC proving viability of the BFT technique.

We next extend the setup to a larger number of qubits. Due to quantum hardware resource constraints, we further reduce the scope of the hardware experiments. Rather than aligning a kernel, we fit a single QSVC instance with arbitrary random parameters and we assess its performance across different bit-flip tolerance levels. For this experiment, we sample datasets with $40$ and $60$ features and corresponsing number of qubits, $30$ data points in each dataset, equally split into training and testing sets as usual. %
As shown in Fig.~\ref{fig:hamming-scores}, with no bit-flip tolerance, all train and test scores stand at $33\%$. Increasing the bit-flip tolerance improves these scores, as expected. The low test scores are attributed to the small dataset size as the models struggle to generalize on the limited number of samples. Indeed, they are in line with classical benchmarks. Nevertheless, the optimal bit-flip tolerance conforms with our previous calibration findings.

\begin{figure}[t]
\centering
\includegraphics{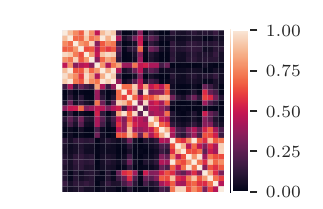}
\caption {
An example of kernel matrix for the training dataset, obtained after alignment on a simulator on 10 qubits and 30 samples.
\label{fig:block-structure}
}
\end{figure}

Connecting the empirical results above with the theory in Subsec.~\ref{subsec:group}, it is reasonable to expect that the good performance of the fidelity kernel is the result of the proper exploitation of some group structure in the data. An example of such a structure is shown in Fig.~\ref{fig:block-structure}. The raw data was further processed to filter only data points corresponding to the three peaks in Fig.~\ref{fig:density}, thus artificially reducing ambiguous and hard-to-classify data. Thence, the kernel matrix clearly shows a block diagonal structure representing the three distinct classes. Both quantum and classical models perform identically showing training and test accuracy of $100\%$ and $87\%$ respectively.

As a general consideration, it should be noted that with real data, one can only detect a covariant group structure up to some degrees of approximation, for multiple reasons:
\begin{itemize}
    \item training labels are noisy, what undermines the exact group structure at the origin,
    \item the data embedding chosen may not expose the data structure perfectly,
    \item the fiducial state may not be detected exactly, as an effect of exploring only a subspace of the Hilbert space by means of a variational form, or as an effect of a suboptimal parameter selection.
\end{itemize}
Consequently, $k(D_1, D_2)$ does not match $\delta^\mathrm{class}_{D_1, D_2}$ exactly, and the scores are less than~$1$. Despite the proof of Prop.~\ref{prop:group} provides a way to describe the group structure hidden in the data when the match is exact, no method to retrieve the underlying structure from the data is known for the non-ideal case. This observation offers a path for future developments.

\subsection{Experiments on synthetic datasets at utility scale}

\begin{figure*}[t]
\centering
\begin{minipage}{.49\textwidth}
    \includegraphics{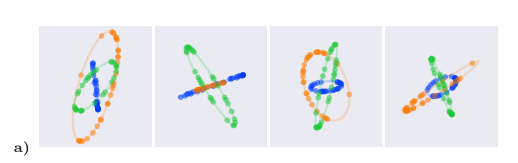}
\end{minipage}
\begin{minipage}{.49\textwidth}
    \includegraphics{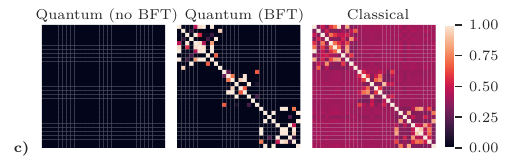}
\end{minipage}
\\[1mm]
\begin{minipage}{.49\textwidth}
    \includegraphics{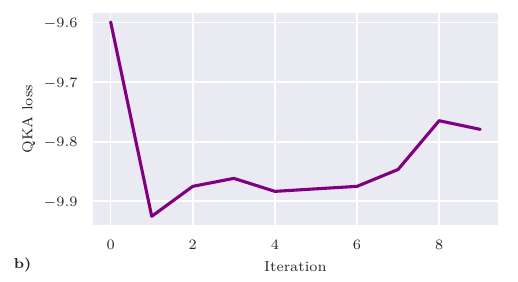}
\end{minipage}
\begin{minipage}{.49\textwidth}
    \includegraphics{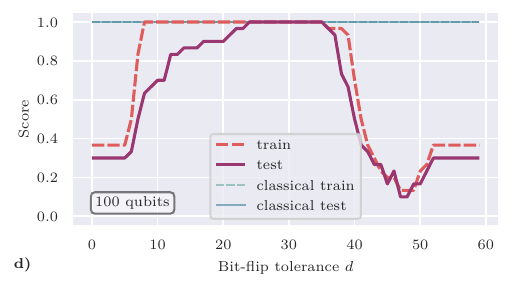}
\end{minipage}
\caption{
Hardware run of QKA with synthetic data, on 100 qubits.
\textbf{a)} Visualization of the dataset. Four projections of the same dataset on random pairs of orthogonal axes. Each color is a class.
\textbf{b)} Loss function during kernel alignment.
\textbf{c)} Comparison of kernel matrices for training datasets: quantum without BFT, quantum with BFT ($d=20$) after alignment, and classical RBF.
\textbf{d)} Train and test scores at different bit-flip tolerance levels, compared to classical benchmarks.
}
\label{fig:synthetic-datasets-hw100}%
\end{figure*}

\begin{figure}[t]%
\centering
\includegraphics{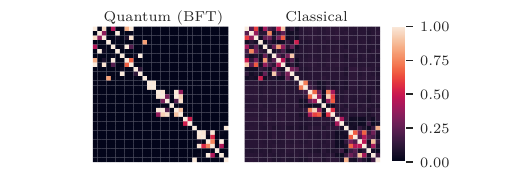}
\caption{
Hardware run of the fidelity kernel with synthetic data, on 156 qubits (no alignment). Comparison of kernel matrices for training datasets: quantum with BFT ($d=32$) and classical RBF (with alignment).
}
\label{fig:synthetic-datasets-hw156}%
\end{figure}

Let us now introduce the synthetic datasets generated from the union of subspaces, as described in Subsec.~\ref{subsec:union-subspaces}.
An initial classical testing shows improved performance with the following generalized RBF kernel: $k(x, x^\prime) = \gamma_1 \exp \left({- \frac{|x-x^\prime|^2}{2 \sigma_1^2}}\right) + \gamma_2 \exp\left({- \frac{|x-x^\prime|^2}{2 \sigma_2^2}}\right)$. Hence in this subsection we use it as the classical benchmark.
Preliminary results on simulator are contained in Appendix~\ref{appsub:sim-res}.
Moving to hardware experiments, we generate three 2d subspaces in a 100d space, as shown in Fig.~\ref{fig:synthetic-datasets-hw100} a). The dataset contains 60 samples, equally split between train and test. QKA is performed for 10 iterations with 500 shots, executed on the \texttt{ibm\_fez} device, with BFT at $d=20$. Referring to Fig.~\ref{fig:featuremap}, in the covariant feature map we take $\alpha=Z, \beta=Y$ in the fiducial state preparation layer, and $\gamma = X$ in the data embedding layer. All diagonal entries in the kernel are evaluated. The transpiled circuit depth for the kernel circuit is 58 \rev{and the two-qubit depth is 42}, with 900 RZ gates, 600 SX, and 198 CZ. The loss function during the alignment process is shown in Fig.~~\ref{fig:synthetic-datasets-hw100} b) where we observe that the loss function fluctuates as the fiducial state generator is learned during kernel alignment. Results are visualized in Fig.~\ref{fig:synthetic-datasets-hw100} c) and d). With thresholds $d$ of BFT between 24 and 35, the quantum test score of $100\%$ is retrieved, whereas without BFT, test score is $30\%$. The classical train and test scores are also $100\%$, respectively.

We also run a single experiment on 156 qubits on the newly released \texttt{ibm\_marrakesh}, without kernel alignment. In this case too, we consider three 2d subspaces. As before, shots are 500, and diagonal entries are evaluated. The transpiled circuit depth is 74 \rev{and the two-qubit depth is 58}, with 1404 RZ gates, 936 SX, and 310 CZ, see Fig.~\ref{fig:qc156} in the appendix for a visualization.  The bit flit tolerance (BFT) threshold is 32.
Fig.~\ref{fig:synthetic-datasets-hw156} shows that the quantum kernel matrix is similar to the classical kernel matrix.
Indeed, test scores are $80\%$ for the quantum model and $83\%$ for the classical one.

\section{Conclusions}\label{sec:conclusions}

In this work, we sought to comprehend the applicability of covariant quantum kernels, which are known to provide theoretical quantum advantage for specifically tailored datasets, to datasets of practical interest and at a utility scale. 
To this end, we applied fidelity quantum kernels to two multiclass classification problems: a real-world dataset related to optimizing the charge/discharge schedules of electric vehicles, as well as synthetic data generated from the union of subspaces. %
We established a theoretical connection between fidelity kernels and classical data belonging to the union of subspaces. We also demonstrated that fidelity quantum kernels can leverage an emergent property of such data. This paves the way for the application of fidelity kernels to the problem of clustering data with a subspace structure, a largely studied class of problems containing computationally hard instances~\cite{pesce2022subspace}.

To achieve our goal of conducting experiments at utility scale, we addressed \rev{one source of exponential concentration, namely hardware noise.} %
Exponential concentration is a common and very limiting problem, that can hinder generalization and degrade performance on unseen data.
We hence developed a readout mitigation strategy, termed bit-flip tolerance, specifically tailored to fidelity kernels. The technique is calibrated \emph{based on the ansatz and application}, in order to achieve reliable results on noisy hardware.
Remarkably, our ansatz was designed based on the quantum chip topology and hardware-native gates.
This approach enabled reliable experimentation with systems involving 100+ qubits.

Indeed, our experiments revealed that the fidelity quantum kernel achieves performance comparable to classical kernels on both real-world data, and on synthetic data at utility scale. Notably, both classical and quantum kernels demonstrated high accuracy, exceeding $80\%$, when applied to real-world data with $10$ features. Similarly, we achieved test scores of $80\%$ with $156$ features on synthetic data, matching the performance of classical kernels. Without BFT mitigation, however, the $156$-qubit experiments yielded an accuracy of only $37\%$. Combined with our other tests, these results strongly support the reliability of the proposed BFT strategy, even though the number of experiments was constrained by limited device availability. %
Notably, our $156$-qubit experiment represents the largest experimental demonstration of quantum machine learning on IBM hardware at the date of writing. We believe this work opens the door for future quantum machine learning experiments at scale, on real world data, which demonstrate some underlying structure that can be used for inductive bias model creation at regimes where classical computers cannot simulate.

\backmatter

\bmhead{Acknowledgements}
AD and GA are thankful to Nicola Mariella for fruitful discussions. CO thanks Jay Gambetta, Paul Nation, and Daniel Bultrini for useful discussions on covariant kernels and error mitigation. KY and GA thank Raja Hebbar and Heather Higgins for their leadership support. \rev{GA is also grateful to Jae-Eun Park, Brian Quanz, Chee-Kong Lee, Hwajung Kang, and Vaibhaw Kumar for valuable comments on the paper draft.}

\clearpage
\newgeometry{onecolumn}
\begin{appendices}

\renewcommand\thetheorem{\thesection\arabic{theorem}}

\section{Theoretical contributions}\label{app:proofs}
\setcounter{theorem}{0}

\subsection{Ideal kernels}\label{appsub:ideal}

\begin{proposition}[See Subsec.~\ref{subsec:multiclass-qka}]\label{prop:app:kernels}
Let $k_t$ and $k'_t$ be two (ideal) kernels defined by
$$k_t(D_1, D_2) :=
\begin{cases}
    1 &\text{if $D_1, D_2$ in same class},\\
    0 &\text{otherwise.}
\end{cases}$$
and
$$
    k'_t(D_1, D_2) =
\begin{cases}
    1 &\text{if $D_1, D_2$ in same class},\\
    -\frac{1}{C-1} &\text{otherwise,}
\end{cases}
$$
where $C$ is the number of classes. For a given dataset $\{x_i\}_{i=1}^m$, define the respective kernel matrices $K_t$ and $K'_t$, and their centered versions $K_t^c$ and $(K_t')^c$. The following holds: $\frac{K_t^c}{\norm{K_t^c}} = \frac{(K_t')^c}{\norm{(K_t')^c}}$, and hence $\mathcal{A}(K_t, \cdot) = \mathcal{A}(K_t', \cdot)$.
\end{proposition}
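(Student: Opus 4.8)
The plan is to exploit that $k'_t$ is an \emph{affine} function of $k_t$ and that the centering operation underlying the alignment annihilates the constant (all-ones) part of a kernel matrix. First I would record the pointwise affine relation: comparing the same-class and different-class cases, I solve for $a,b$ with $k'_t = a\,k_t + b$, obtaining $a = \tfrac{C}{C-1}$ and $b = -\tfrac{1}{C-1}$. At the matrix level this reads $K'_t = \tfrac{C}{C-1}\,K_t - \tfrac{1}{C-1}\,J$, where $J = \mathbf 1$ is the $m\times m$ all-ones matrix.

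Next I would rewrite the centering map compactly as $K^c = H K H$ with $H := I - \tfrac{1}{m} J$; expanding $HKH$ reproduces exactly the expression $K - \mathbf 1 K/m - K\mathbf 1/m + \mathbf 1 K \mathbf 1/m^2$ from the preliminaries. The crucial computation is that $H$ kills the all-ones matrix: since $J^2 = mJ$, one gets $HJ = J - \tfrac{1}{m}J^2 = 0$ and symmetrically $JH = 0$, hence $HJH = 0$.

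Applying $H(\cdot)H$ to the affine identity then yields $(K'_t)^c = \tfrac{C}{C-1}\,K_t^c - \tfrac{1}{C-1}\,HJH = \tfrac{C}{C-1}\,K_t^c$. Because $C \ge 2$, the factor $\tfrac{C}{C-1}$ is strictly positive, so normalization in Frobenius norm cancels it and gives $\frac{(K'_t)^c}{\norm{(K'_t)^c}} = \frac{K_t^c}{\norm{K_t^c}}$. Finally, writing the centered alignment of Eq.~\eqref{eq:kernel_alignment} as $\mathcal{A}(K_t,K) = \big\langle K_t^c/\norm{K_t^c},\, K^c/\norm{K^c}\big\rangle_F$ shows it depends on the target only through this normalized centered matrix, so $\mathcal{A}(K_t,\cdot) = \mathcal{A}(K'_t,\cdot)$. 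The only real obstacle---and it is minor---is the annihilation step $HJH=0$, which is what makes the additive shift $b$ vanish; the point to verify is that the surviving multiplicative factor $\tfrac{C}{C-1}$ is positive, so that the normalization preserves the direction of the centered matrix rather than flipping its sign.
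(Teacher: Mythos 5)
Your proof is correct and follows essentially the same route as the paper: both reduce the claim to the affine relation between the two target kernels, observe that centering annihilates the additive all-ones shift, and note that Frobenius normalization removes the remaining multiplicative factor. Your packaging of the centering as conjugation by $H = I - J/m$ with $HJH = 0$ is a tidier version of the paper's termwise expansion, and your explicit check that the surviving factor $\tfrac{C}{C-1}$ is positive (so normalization does not flip the sign) is a detail the paper leaves implicit.
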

\begin{proof}
Let us drop the subscript $t$ for convenience. Observe that $K = \frac{K' - a \mathbf{1}}{1-a}$ where $\mathbf{1}$ is a matrix of ones, and $a=-\frac{1}{C-1}$.
It is obvious that $\frac{K^c}{\norm{K^c}}$ in invariant under multiplicative constants, so we only need to show that it is invariant under translations. Thence, let $K'' = K + \alpha \mathbf1$, for some $\alpha$. Recalling the definition of centered kernel,
\begin{equation*}
\begin{split}
(K'')^c & = K'' - \mathbf1 K''/m - K'' \mathbf1/m + \mathbf1 K'' \mathbf1/m^2\\
& = (K + \alpha \mathbf1) - \mathbf1 (K + \alpha \mathbf1)/m - (K + \alpha \mathbf1) \mathbf1/m + \mathbf1 (K + \alpha \mathbf1) \mathbf1/m^2\\
& = K^c + \alpha \mathbf1 - \alpha \mathbf1 \mathbf1 /m - \alpha \mathbf1 \mathbf1/m + \alpha \mathbf1 \mathbf1 \mathbf1/m^2\\
& = K^c + \alpha \mathbf1 - \alpha \mathbf1 - \alpha \mathbf1 + \alpha \mathbf1\\
& = K^c.
\end{split}
\end{equation*}
\end{proof}

\subsection{Covariant kernels}\label{appsub:covariant}

\begin{proposition}[Restatement of Prop.~\ref{prop:group}]
Let $\ket{\psi}$ be a given statevector, and let $\{ \mathcal{U}_j \}_{j=1}^m$ be subsets of $U(2)^{\otimes n}$. Then, the fidelity quantum kernel $k(D, D')$ defined as
$$k(D, D') = \abs{\braket{ \psi | D^\dag D' | \psi }}^2$$
in Eq.~\eqref{eq:fqk-operators} equals $\delta^\mathrm{class}_{D, D'}$ defined in Eq.~\eqref{eq:deltaclass} for all $D, D' \in \bigcup_{j=1}^m \mathcal{U}_j$ if and only if $k$ is covariant with $\{ \mathcal{U}_j \}_{j=1}^m$, namely if and only if there are a subgroup $\mathcal{S} < U(2)^{\otimes n}$ and elements $C_j \in U(2)^{\otimes n}$ for $j=1,...,m$ such that:
\begin{enumerate}
\item $\mathcal{U}_j$ is a subsets of the coset defined by $C_j$, namely $\mathcal{U}_j \subset C_j \mathcal{S}$,
\item $\ket\psi$ is $\mathcal{S}$-invariant modulo global phases, namely for all $S \in \mathcal{S}$ there exists one $\theta_S \in [0,2\pi)$ such that $S \ket\psi = e^{i \theta_S} \ket\psi$, and
\item for all $\ell \neq j$, $C_j \ket\psi$ is orthogonal to $C_\ell \ket\psi$, namely $\braket{\psi | C_j^\dag C_\ell | \psi} = 0$.
\end{enumerate}
\end{proposition}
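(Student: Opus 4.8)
The plan is to prove the two implications separately. I treat the forward direction (covariance $\Rightarrow k = \delta^{\mathrm{class}}$) as the argument of Ref.~\cite{glick_covariant_2024} adapted to the phase-relaxed Definition~\ref{defn:group-structure}, and devote the main effort to the converse.

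For the forward direction, suppose $\mathcal{S}$ and the $C_j$ satisfying the three properties are given. First I take $D, D'$ in the same class $j$, write $D = C_j S$ and $D' = C_j S'$ with $S, S' \in \mathcal{S}$, and use $C_j^\dagger C_j = I$ to get $\braket{\psi | D^\dagger D' | \psi} = \braket{\psi | S^\dagger S' | \psi}$. Property~2 then lets me substitute $S\ket{\psi} = e^{i\theta_S}\ket{\psi}$ and $S'\ket{\psi} = e^{i\theta_{S'}}\ket{\psi}$, so this inner product equals $e^{i(\theta_{S'}-\theta_S)}$ and $k(D,D') = 1$. For $D \in \mathcal{U}_j$ and $D' \in \mathcal{U}_\ell$ with $j \neq \ell$, the same substitution reduces $\braket{\psi | D^\dagger D' | \psi}$ to a unit phase times $\braket{\psi | C_j^\dagger C_\ell | \psi}$, which vanishes by Property~3; hence $k(D,D') = 0$. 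Together these give $k = \delta^{\mathrm{class}}$.

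For the converse --- the novel part --- assume $k(D,D') = \delta^{\mathrm{class}}_{D,D'}$ on all of $\bigcup_j \mathcal{U}_j$, with each class non-empty. I construct the structure explicitly: pick any representative $C_j \in \mathcal{U}_j$ for each $j$, and define $\mathcal{S}$ to be the \emph{projective stabilizer} of $\ket{\psi}$, namely $\mathcal{S} := \{S \in U(2)^{\otimes n} : S\ket{\psi} = e^{i\theta}\ket{\psi} \text{ for some } \theta\}$. A short check shows $\mathcal{S}$ is closed under products and inverses and contains the identity, so it is a subgroup; Property~2 holds by construction, and Property~3 is immediate because $k(C_j, C_\ell) = 0$ for $j \neq \ell$ means $\braket{\psi | C_j^\dagger C_\ell | \psi} = 0$.

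The crux is Property~1. For $D \in \mathcal{U}_j$, the hypothesis gives $k(C_j, D) = 1$, i.e.\ $\abs{\braket{\psi | C_j^\dagger D | \psi}} = 1$. Since $C_j$ and $D$ are unitary and $\ket{\psi}$ is normalized, $C_j\ket{\psi}$ and $D\ket{\psi}$ are unit vectors, so this saturates Cauchy--Schwarz, $\abs{\braket{\psi | C_j^\dagger D | \psi}} \le \|C_j\ket{\psi}\|\,\|D\ket{\psi}\| = 1$. Equality forces $D\ket{\psi} = e^{i\theta}C_j\ket{\psi}$, whence $C_j^\dagger D\ket{\psi} = e^{i\theta}\ket{\psi}$, i.e.\ $C_j^\dagger D \in \mathcal{S}$ and $D \in C_j\mathcal{S}$; thus $\mathcal{U}_j \subset C_j\mathcal{S}$. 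I expect this to be the main obstacle: recognizing that the equality case of Cauchy--Schwarz is what upgrades the scalar condition $k=1$ into membership in a single coset, and correspondingly choosing $\mathcal{S}$ as the phase-tolerant stabilizer rather than the exact one --- which is exactly why the relaxed invariance-modulo-global-phase of Definition~\ref{defn:group-structure} is the right setting. A minor point to dispatch is the degenerate case of empty classes, where the coset condition is vacuous and only the orthogonality of Property~3 need be arrangeable.
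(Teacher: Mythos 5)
Your proof is correct, and both directions land on the same underlying mechanics as the paper's. The forward direction is identical. In the converse, you and the paper both fix a representative $C_j$ in each class and both rely on the same key fact --- that $\abs{\braket{\psi | C_j^\dag D | \psi}} = 1$ forces $C_j^\dag D \ket\psi = e^{i\theta}\ket\psi$ (you justify this via the equality case of Cauchy--Schwarz; the paper asserts it directly) --- but you make opposite choices of the subgroup $\mathcal{S}$. The paper takes the \emph{minimal} candidate, the group generated by $\bigcup_j C_j^\dag \mathcal{U}_j$, so that Property~1 holds by construction and the work goes into Property~2 (invariance of $\ket\psi$ under each generator, hence under the whole group). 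You take the \emph{maximal} candidate, the projective stabilizer of $\ket\psi$, so that Property~2 holds by construction and the work goes into Property~1 (coset membership via saturation of Cauchy--Schwarz). The two constructions are dual and equally valid: yours has the minor advantage that the subgroup property of the projective stabilizer is immediate and Property~2 costs nothing, while the paper's exhibits the smallest group witnessing covariance, which is arguably more informative about the ``hidden'' structure the proposition is meant to reveal. Your closing remark about empty classes addresses a degenerate case the paper silently ignores; it is harmless either way.
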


\begin{proof}
Assume covariance. Take $D \in \mathcal U_j$ and $D' \in \mathcal U_\ell$. Write
\begin{eqnarray*}
    \braket{\psi | D^\dag D' | \psi} &=& \braket{\psi | S^\dag C_j^\dag C_\ell S' | \psi}, \quad \text{some $S,S' \in \mathcal{S}$, by Property 2}\\
    &=& e^{- i \theta + i \theta'} \braket{\psi | C_j^\dag C_\ell | \psi}, \quad \text{some $\theta, \theta'$, by Property 2}\\
    &=& e^{- i \theta + i \theta'} \delta_{j = \ell}, \quad \text{by Property 3.}
\end{eqnarray*}
Therefore $k(D, D') = \abs{ e^{- i \theta + i \theta'} \delta_{j = \ell} }^2 = \delta^\mathrm{class}_{D, D'}$.

Vice versa, assume $k(D, D') = \delta^\mathrm{class}_{D, D'}$. For all $j=1,...,m$, fix one element $C_j$ in $\mathcal{U}_j$. Then, let $\mathcal{S}$ be the  group generated by $C_j^\dag \mathcal{U}^j$ for all $j$, namely
$$\mathcal{S}:= \subgrgen { \bigcup_{j=1}^m \left\{ C_j^\dag D_j \right\}_{D_j \in \mathcal{U}_1} }.$$
Property 1 of the covariant structure holds by definition of $\mathcal{S}$, since any $D \in \mathcal{U}_j$ can be written as $D=C_j C_j^\dag D$ with $C_j^\dag D \in \mathcal{S}$.

Additionally, we know that $1 = k(D, D') = \abs{\braket{ \psi | D^\dag D' | \psi}}^2 = 1$ for all $D, D' \in \mathcal{U}_j$. Specifically, $\abs{\braket{ \psi | C_j^\dag D | \psi}}^2 = 1$ for all $D \in \mathcal{U}_j$. Therefore $C_j^\dag D \ket\psi = e^{i \theta} \ket\psi$, for all $D \in \mathcal{U}^+$ and some $\theta = \theta(D)$. Consequently $\ket\psi$ is invariant under all generators of $\mathcal{S}$ modulo global phases, and hence under $\mathcal{S}$, thus proving Property~2.

Lastly, Property~3 directly follows from the hypothesis that $k(D, D') = 0$ for all $D \in \mathcal{U}_j$ and $D' \in \mathcal{U}_\ell$, by taking $D = C_j$ and $D' = C_\ell$.
\end{proof}

\subsection{Union of subspaces: classical inequalities}\label{appsub:subspaces}
This subsection culminates in Prop.~\ref{prop:subspace-classical}, stating that the average inner product of two elements from the same subspace is higher than the average inner product between vectors in distinct subspaces, under appropriate conditions. In order to prove it, some preliminaries are introduced below.

\begin{lemma}\label{lem:sqrt-inn-prod}
The expected squared inner product between a uniformly sampled point on the unitary sphere $S^{d-1}$ and a fixed point on the same sphere is $1/d$. In formulas:
$$\frac{1}{\abs{S^{d-1}}} \int_{S^{d-1}} \abs{\inner{x}{\bar x}}^2 \, \dd S^{d-1}(x) = \frac{1}{d}$$
\end{lemma}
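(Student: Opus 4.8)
The plan is to exploit the rotational invariance of the uniform measure on $S^{d-1}$ to reduce the integral to a single coordinate moment, which can then be pinned down by a symmetry-plus-normalization argument, avoiding any explicit integration.

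First I would invoke the fact that the normalized surface measure on $S^{d-1}$ is invariant under the orthogonal group $O(d)$. Since for any $R \in O(d)$ we have $\inner{Rx}{\bar x} = \inner{x}{R^\top \bar x}$, a change of variables shows that the value of the integral does not depend on the choice of the fixed unit vector $\bar x$. Choosing an $R$ that maps $\bar x$ to the first standard basis vector $e_1$, the integrand $\abs{\inner{x}{\bar x}}^2$ becomes simply $x_1^2$ (the entries are real, so the modulus is just the square). Thus the quantity to compute reduces to the single moment $\mathbb{E}[x_1^2]$ for $x$ distributed uniformly on the sphere.

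Next I would use the invariance of the measure under coordinate permutations, which form a subgroup of $O(d)$: this forces $\mathbb{E}[x_i^2]$ to be independent of $i$. Combining this with the defining constraint $\sum_{i=1}^d x_i^2 = 1$, which holds pointwise on $S^{d-1}$, linearity of expectation gives $1 = \sum_{i=1}^d \mathbb{E}[x_i^2] = d\,\mathbb{E}[x_1^2]$, whence $\mathbb{E}[x_1^2] = 1/d$, which is exactly the claim.

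The computation is essentially routine once the symmetry reduction is in place; the only point requiring care is the justification that the normalized surface measure is genuinely $O(d)$-invariant and that this invariance transfers cleanly both to the reduction $\bar x \mapsto e_1$ and to the coordinate permutations. An alternative, more computational route would be to realize the uniform distribution as a normalized Gaussian vector $g/\|g\|$ and evaluate $\mathbb{E}[g_1^2/\|g\|^2]$ via Beta-function moments, but the symmetry argument sidesteps these integrals entirely, and it is the one I would present.
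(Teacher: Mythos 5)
Your proof is correct, but it takes a genuinely different route from the paper's. The paper reduces to the same one-dimensional moment (after fixing $\bar x = e_1$) but then evaluates it explicitly in hyperspherical coordinates: all angular integrals except the one in $\theta_1$ cancel between numerator and denominator, and the remaining ratio is handled via the Wallis-type recurrence $I_d = \frac{d-1}{d} I_{d-2}$ for $I_d = \int_0^\pi \sin^d\theta\,\dd\theta$, obtained by integration by parts, giving $J = 1 - I_d/I_{d-2} = 1/d$. You instead avoid all integration: after the same reduction to $\mathbb{E}[x_1^2]$, you use permutation invariance of the uniform measure together with the pointwise identity $\sum_{i=1}^d x_i^2 = 1$ and linearity of expectation to conclude $d\,\mathbb{E}[x_1^2] = 1$. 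Your argument is shorter, requires only the $O(d)$-invariance of the normalized surface measure (which is standard and which you rightly flag as the one point needing justification), and generalizes more readily (for instance it also yields $\mathbb{E}[x_i x_j] = 0$ for $i \neq j$ by a sign-flip symmetry). The paper's computation is more self-contained in that it never appeals to invariance of the measure, at the cost of an explicit coordinate calculation; it also contains a minor notational slip (mixing $d$ and $n$ in the exponents of the hyperspherical volume element) that your approach sidesteps entirely.
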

\begin{proof}
By symmetry, we can assume $\bar x = (1,0, ..., 0)$.
In hyperspherical coordinates, and writing $\abs{S^{d-1}}$ explicitely, the left hand side $J$ of the thesis rewrites:
$$J = \frac
{\int_0^\pi \dd \theta_1 \int_0^\pi \dd \theta_2 \cdots \int_0^{2\pi} \dd \theta_{d-1} \cos^2 \theta_1 \sin^{d-2} \theta_1 \sin^{n-3} \theta_2 \cdots \sin \theta_{n-2}}
{\int_0^\pi \dd \theta_1 \int_0^\pi \dd \theta_2 \cdots \int_0^{2\pi} \dd \theta_{d-1} \sin^{d-2} \theta_1 \sin^{n-3} \theta_2 \cdots \sin \theta_{n-2}}.
$$
By separating the variables:
$$J = \frac
{\int_0^\pi \cos^2 \theta_1 \sin^{d-2} \theta_1 \, \dd \theta_1}
{\int_0^\pi \sin^{d-2} \theta_1 \, \dd \theta_1}
.$$
Now, set $I_d := \int_0^\pi \sin^{d} \theta_1 \, \dd \theta_1$. Writing $\cos^2 \theta_1 = 1- \sin^2 \theta_1$, the last formula becomes
$$J = \frac{ I_{d-2} - I_d }{I_{d-2}} = 1-\frac{I_d}{I_{d-2}} .$$
Now, $I_d$ can be integrated by parts, by differentiating the factor $\sin \theta$, to show that $I_d = (d-1) I_{d-2} - (d-1) I_d$, namely $I_d = \frac{d-1}{d} I_{d-2}$. Thence,
$$J = 1- \frac{I_d}{I_{d-2}} = 1- \frac{d-1}{d} = \frac{1}{d}.$$
This completes the proof.
\end{proof}

\begin{definition}[Principal angles~\cite{heckel2015robust}] Let $X$ and $Y$ be two $d$-dimensional subspaces of $\mathbb R^D$. Define recursively $\{ u_1, ..., u_d \}$ and $\{ v_1, ..., v_d \}$ to be orthonormal bases of $X$ and $Y$ resp., by
$$u_j, v_j := {\arg\max} \inner{x}{y},$$
where the $\arg\max$ is taken on $x \in X$, $y \in Y$, $\norm{x}_2 = \norm{y}_2 = 1$, $x \bot u_i$ and $y \bot v_i$ for all $i=1,...,j$. Then, the principal angles are defined by
$$\cos \theta_j := \inner{u_j}{v_j}.$$
\end{definition}

\begin{lemma}\label{lem:orth}
Let $\{ u_1, ..., u_d \}$ and $\{ v_1, ..., v_d \}$ be one pair of orthonormal bases defining the principal angles. Then, $u_i \bot v_j$ for all $i \neq j$.
\end{lemma}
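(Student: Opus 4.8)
The goal is to prove that for a single choice of principal-angle bases $\{u_1,\dots,u_d\}$ and $\{v_1,\dots,v_d\}$ of subspaces $X,Y$, the cross terms vanish: $\inner{u_i}{v_j}=0$ whenever $i\neq j$. The plan is to argue directly from the recursive $\arg\max$ definition of the principal angles, exploiting the optimality at each stage together with the orthogonality constraints imposed by the recursion.

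First I would set up notation and fix $i<j$ without loss of generality (the case $i>j$ is symmetric after swapping the roles of $X$ and $Y$). The key structural fact is that at the $j$-th step of the recursion, $v_j$ is chosen to maximize $\inner{x}{y}$ subject to $y\perp v_1,\dots,v_{j-1}$ and $x\perp u_1,\dots,u_{j-1}$. The core idea is a perturbation/first-order-optimality argument: I would consider varying $u_j$ (or $v_j$) within the admissible constraint set and show that a nonzero component $\inner{u_i}{v_j}$ would allow one to increase the objective $\cos\theta_j=\inner{u_j}{v_j}$, contradicting optimality — or, more cleanly, that the stationarity conditions force the cross inner products to be zero.

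A cleaner route I would actually pursue relies on the standard characterization that the principal-angle bases simultaneously biorthogonalize the pair of subspaces: one expects $\inner{u_i}{v_j}=\cos\theta_i\,\delta_{ij}$. I would try to establish this by a Lagrange-multiplier / stationarity analysis of the constrained optimization defining $u_j,v_j$. Writing the Lagrangian for maximizing $\inner{x}{y}$ with the normalization and orthogonality constraints, the first-order conditions give $y = \lambda x + \sum_{i<j}\mu_i u_i$ projected appropriately; pairing this relation against $v_j$ and using $\inner{u_i}{v_j}$ as the unknowns, combined with the earlier-established orthogonality $u_i\perp v_i$ structure from previous steps, should collapse the cross terms to zero. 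Alternatively, I would invoke the SVD of the matrix $U^\top V$ (where $U,V$ have the bases as columns): the principal vectors are precisely the left/right singular vectors, so $U^\top V$ is diagonal, which is exactly the claim $\inner{u_i}{v_j}=0$ for $i\neq j$.

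The main obstacle is handling the recursive, greedy nature of the definition rigorously: the $\arg\max$ is taken sequentially, and I must verify that the greedy choice genuinely yields the SVD-type simultaneous diagonalization rather than merely a locally optimal configuration. In particular, I expect the delicate step to be showing that optimality of $v_j$ against \emph{all} admissible $y\perp v_1,\dots,v_{j-1}$ — not just against $u_j$ — forces $v_j$ to have no component along the previously selected $u_i$ directions once projected into $X$. I would address this by exploiting the symmetry of the construction (the roles of $u$ and $v$ are interchangeable at each step) and by using that $u_j\in X$ together with $u_j\perp u_i$; pairing the stationarity relation for the pair $(u_j,v_j)$ against $u_i$ and separately against $v_i$, and feeding in the inductive hypothesis that the cross terms vanish for all indices below $j$, should close the argument by induction on $j$.
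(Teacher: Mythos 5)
Your general instinct---that the greedy optimality in the definition must force the cross terms to vanish---is right, but the argument as sketched points the contradiction at the wrong stage of the recursion, and that is a genuine gap. You propose to derive a contradiction from the optimality of the pair $(u_j, v_j)$ at step $j$, arguing that a nonzero $\inner{u_i}{v_j}$ would let you increase $\cos\theta_j$. But at step $j$ the admissible $x$ are constrained to satisfy $x \perp u_i$, so the stationarity/optimality conditions at step $j$ only see the projection of $v_j$ onto $X \cap \{u_1,\dots,u_{j-1}\}^\perp$ and can say nothing about $\inner{u_i}{v_j}$ for $i<j$. The contradiction has to come from the optimality of $v_i$ at the \emph{earlier} step $i$, where $v_j$ (being orthogonal to $v_1,\dots,v_{i-1}$) is still an admissible direction to mix into the candidate $y$. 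This is exactly what the paper does, with a one-line explicit construction: if $\inner{u_i}{v_j}\neq 0$ for $i<j$, take $y := \bigl(\inner{u_i}{v_i} v_i + \inner{u_i}{v_j} v_j\bigr)/\sqrt{\inner{u_i}{v_i}^2 + \inner{u_i}{v_j}^2}$, which is a unit vector in $Y$ orthogonal to $v_1,\dots,v_{i-1}$ and satisfies $\inner{u_i}{y} = \sqrt{\inner{u_i}{v_i}^2+\inner{u_i}{v_j}^2} > \inner{u_i}{v_i}$, contradicting the maximality defining $v_i$. (Equivalently, in your Lagrange-multiplier language: optimality at step $i$ gives $P_{Y_i} u_i = \cos\theta_i\, v_i$ with $Y_i = Y\cap\{v_1,\dots,v_{i-1}\}^\perp$, and pairing against $v_j \in Y_i$ kills the cross term. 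But the relation must be the one from step $i$, not step $j$.)

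Your alternative SVD route is also not yet a proof: identifying the greedy principal vectors with the singular vectors of $U^\top V$ is essentially equivalent in content to the lemma itself, and you explicitly defer the key step (``the greedy choice genuinely yields the SVD-type simultaneous diagonalization'') to an induction that is never carried out. To make either route work you need the explicit perturbation at step $i$ described above; once you have that, the induction and the Lagrange machinery become unnecessary.
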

\begin{proof}
Assume $u_i$ not orthogonal to $v_j$, for some $i<j$. Let $y := \frac{\inner{u_i}{v_i} v_i + \inner{u_i}{v_j} v_j}{\sqrt{\inner{u_i}{v_i}^2 + \inner{u_i}{v_j}^2}}$. One can easily show that $y$ is normal and that $\inner{u_i}{y}$ is greater than $\inner{u_i}{v_i}$, contradicting the definition of $v_i$.
\end{proof}

\begin{proposition}\label{prop:subspace-classical}
Let $X, Y$ be two linearly independent $d$-dimensional subspaces of~$\mathbb R^n$. Let $x, x'$ be uniformly and independently sampled from the unit sphere in $X$, and $y$ be uniformly sampled from the unit sphere in $Y$ independently of $x, x'$. Then,
$$\mathbb E [\abs{\inner{x}{y}}^2] < \mathbb E [\abs{\inner{x}{x'}}^2].$$
\end{proposition}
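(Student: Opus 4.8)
The plan is to compute both sides explicitly and compare. By Lemma~\ref{lem:sqrt-inn-prod}, the right-hand side is immediate: since $x, x'$ are independent and uniform on the unit sphere of the $d$-dimensional space $X$, conditioning on $x'$ and applying the lemma (with the ambient dimension equal to $d$) gives $\mathbb E[\abs{\inner{x}{x'}}^2] = 1/d$. So the content of the statement is entirely in bounding the left-hand side strictly below $1/d$.

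For the left-hand side, I would diagonalize the cross-correlation using the principal-angle machinery of Lemma~\ref{lem:orth}. Fix orthonormal bases $\{u_1,\dots,u_d\}$ of $X$ and $\{v_1,\dots,v_d\}$ of $Y$ realizing the principal angles $\theta_1,\dots,\theta_d$. By the definition of principal angles and Lemma~\ref{lem:orth}, $\inner{u_i}{v_j} = \cos\theta_i \,\delta_{ij}$. Writing $x = \sum_i a_i u_i$ with $a$ uniform on the unit sphere $S^{d-1}$ (in the $a$-coordinates) and $y = \sum_j b_j v_j$ with $b$ independently uniform on $S^{d-1}$, the inner product collapses to $\inner{x}{y} = \sum_i a_i b_i \cos\theta_i$. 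Taking the expectation of its square and using independence together with $\mathbb E[a_i a_{i'}] = \mathbb E[b_i b_{i'}] = \tfrac{1}{d}\delta_{ii'}$ (a consequence of the uniform symmetry, which Lemma~\ref{lem:sqrt-inn-prod} already encodes for the diagonal terms, the off-diagonal ones vanishing by symmetry under sign flips), I expect
\begin{equation*}
\mathbb E[\abs{\inner{x}{y}}^2] = \frac{1}{d^2} \sum_{i=1}^d \cos^2\theta_i.
\end{equation*}
The comparison then reduces to showing $\frac{1}{d^2}\sum_i \cos^2\theta_i < \frac{1}{d}$, i.e. $\sum_i \cos^2\theta_i < d$, which holds because each $\cos^2\theta_i \le 1$ and linear independence of $X$ and $Y$ forces at least one principal angle to be nonzero, making at least one term strictly less than $1$.

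The main obstacle I anticipate is justifying the moment identities $\mathbb E[a_i a_{i'}] = \tfrac{1}{d}\delta_{ii'}$ cleanly and confirming that the off-diagonal cross terms in $\mathbb E[(\sum_i a_i b_i \cos\theta_i)^2]$ genuinely vanish; this rests on the rotational/sign symmetry of the uniform measure on the sphere, and I would invoke the symmetry argument (or Lemma~\ref{lem:sqrt-inn-prod} applied componentwise) rather than recomputing hyperspherical integrals. A secondary point requiring care is the precise meaning of \emph{linearly independent} subspaces in the hypothesis: I read it as $X \cap Y = \{0\}$, which guarantees that not all principal angles are zero (equivalently, not all $\cos\theta_i = 1$), and hence the strictness of the final inequality. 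If instead the intended notion were merely $X \neq Y$, the same conclusion follows as long as at least one principal angle is nonzero, so the argument is robust to that ambiguity.
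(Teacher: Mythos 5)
Your proposal is correct and follows essentially the same route as the paper's proof: expand $x$ and $y$ in the principal-angle bases, use Lemma~\ref{lem:orth} to collapse $\inner{u_i}{v_j}$ to $\cos\theta_i\,\delta_{ij}$, invoke Lemma~\ref{lem:sqrt-inn-prod} for the second moments, and reduce the claim to $\tfrac{1}{d^2}\sum_j\cos^2\theta_j<\tfrac{1}{d}$. If anything your write-up is slightly more careful than the paper's: you obtain the left-hand side as an exact equality by observing that the off-diagonal cross terms vanish under the sign symmetry of the spherical measure (the paper instead inserts a ``$\leq$'' whose pointwise form $\abs{\sum_j c_j}^2\leq\sum_j\abs{c_j}^2$ would be false and is only valid after integration for exactly this reason), and you make explicit that linear independence forces some $\cos^2\theta_j<1$, which is what gives the strict inequality.
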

\begin{proof}
For the left hand side, write
\begin{eqnarray*}
\mathbb E [\abs{\inner{x}{y}}^2]
&=& \frac{1}{\abs{S^{d-1}}^2} \int_{S^{d-1}} \dd S^{d-1}(x) \int_{S^{d-1}} \dd S^{d-1}(y) \, \abs{\inner{x}{y}}^2 \\
&=& \frac{1}{\abs{S^{d-1}}^2} \int_{S^{d-1}} \dd S^{d-1}(x) \int_{S^{d-1}} \dd S^{d-1}(y) \, \abs*{\inner{ \sum_{i=1}^d \inner{x}{u_i} u_i }{\sum_{j=1}^d \inner{y}{v_j} v_j}}^2 \\
&=& \frac{1}{\abs{S^{d-1}}^2} \int_{S^{d-1}} \dd S^{d-1}(x) \int_{S^{d-1}} \dd S^{d-1}(y) \, \abs*{\sum_{i,j} \inner{x}{u_i} \inner{y}{v_j} \inner{u_i}{v_j}}^2 \\
&=& \frac{1}{\abs{S^{d-1}}^2} \int_{S^{d-1}} \dd S^{d-1}(x) \int_{S^{d-1}} \dd S^{d-1}(y) \, \abs*{\sum_{j=1}^d \inner{x}{u_j} \inner{y}{v_j} \inner{u_j}{v_j}}^2,
\end{eqnarray*}
where we used Lemma~\ref{lem:orth} in the last step. Therefore,
\begin{eqnarray*}
\mathbb E [\abs{\inner{x}{y}}^2]
&\leq& \frac{1}{\abs{S^{d-1}}^2} \int_{S^{d-1}} \dd S^{d-1}(x) \int_{S^{d-1}} \dd S^{d-1}(y) \, \sum_{j=1}^d \abs{\inner{x}{u_j} \inner{y}{v_j} \inner{u_j}{v_j}}^2 \\
&=& \frac{1}{\abs{S^{d-1}}^2} \sum_{j=1}^d \abs{\inner{u_j}{v_j}}^2 \int_{S^{d-1}} \abs{\inner{x}{u_j}}^2 \, \dd S^{d-1}(x) \int_{S^{d-1}} \abs{\inner{y}{v_j} }^2 \, \dd S^{d-1}(y) \\
&=& \frac{1}{d^2} \sum_{j=1}^d \cos^2 \theta_j,
\end{eqnarray*}
by Lemma~\ref{lem:sqrt-inn-prod} and by the definition of principal angles $\theta_j$. Clearly, the last expression is smaller than $1/d$, that is the right hand side in the thesis, again by Lemma~\ref{lem:sqrt-inn-prod}.
\end{proof}

\subsection{Union of subspaces: quantum inequalities}\label{appsub:subspaces-quantum}

\begin{figure}[t]%
\centering
\includegraphics{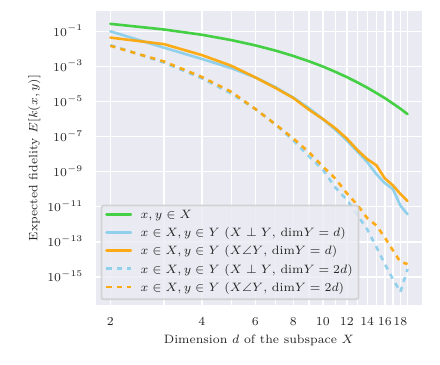}
\caption{
The expectation of $k(x,y)$, as a function of the subspace size, in five different cases: (i) $x, y$ belong to the same subspace $X$, (ii) $x \in X, y \in Y$, with $X$ and $Y$ orthogonal an equally sized, (iii) $x \in X, y \in Y$, with $X$ and $Y$ independent non-orthogonal, and $\dim Y = \dim X$, (iv) $x \in X, y \in Y$, with $X$ and $Y$ orthogonal, and $\dim Y = 2 \dim X$, (v) $x \in X, y \in Y$, with $X$ and $Y$ independent non-orthogonal, and $\dim Y = 2 \dim X$.
}
\label{fig:subspace-quantum}%
\end{figure}

The objective of this subsection is to show numerically that an analog of Prop.~\ref{prop:subspace-classical} holds for the quantum case, i.e. once we replace the classical kernel entries $k(x,y) = \abs{\inner{x}{y}}^2$, with the quantum kernel entries $k(x,y) = \abs{\braket{\psi | D(x)^\dagger D(y) | \psi} }^2$, under some additional hypotheses.

Like above, let $X,Y$ be two linearly independent subspaces of $\mathbb R^n$, and assume $x$ to lie on the unitary sphere. We put ourselves in the simplified setting of $\ket\psi = \ket 0$, and $D(x)$ to be the $R_X$ angle encoding, namely $D(x) = \bigotimes_{i=1}^n R_X(2 \pi x_i)$. Then, $k(x,y) = \abs{\braket{\psi | D(x)^\dagger D(y) | \psi} }^2 = \prod_{i=1}^n \cos^2(x_i-y_i)$. The same would hold for the $R_Y$ angle encoding. Now, Fig.~\ref{fig:subspace-quantum} shows the expectation of $k(x,y)$, estimated via $10{,}000{,}000$ random shots, as a function of the subspace size, when the two points belong to a same subspace, or when they belong to $X$ and $Y$, varying the mutual position of $X$ and $Y$ (orthogonal or independent non-orthogonal), as well as the dimension of $Y$ compared to that of $X$. Data for the experiments are generated exploiting the fact that, taken normally distributed vector components in $\mathbb R^d$, the normalized vector is uniformly distributed on the sphere $S^{d-1}$. Starting from independent orthogonal subspaces, the data for independent non-orthogonal subspaces are obtained by applying a random rotation to $Y$ in the space $\mathbb R^{\dim X + \dim Y}$, see \texttt{special\_ortho\_group.rvs()} in Scipy~\cite{2020SciPy-NMeth}.

From the figure, it emerges that the desired inequality~\eqref{eq:subspace-inner} holds in the tested settings, as the green line is always above the others.

\section{Additional experimental information}

\subsection{Simulator results on synthetic data}\label{appsub:sim-res}

\begin{figure}[t]%
\centering
\includegraphics{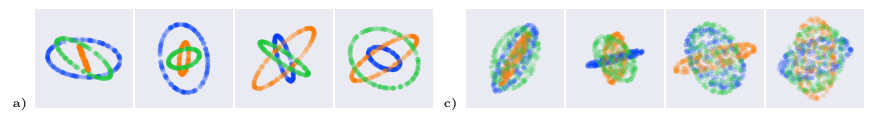}
\\[1mm]
\includegraphics{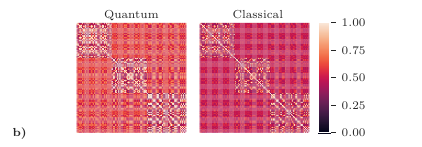}
\includegraphics{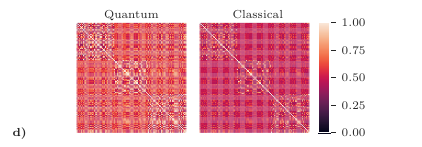}
\caption{
Simulation of QKA with synthetic data, on 10 qubits.
\textbf{a)} Visualization of the data. Here, classes are 2d subspaces of the 10d space. Four projections of the same dataset on random pairs of orthogonal axes. Each color is a class.
\textbf{b)} Kernel matrix for the training dataset after alignment, for the 2d case.
\textbf{c)} Visualization of the data. Here, classes are 3d subspaces of the 10d space.
\textbf{d)} Kernel matrix for the training dataset after alignment, for the 3d case.
}
\label{fig:synthetic-data-kernels-sim}%
\end{figure}

The dataset has a total vector space dimension of $10$, and $3$ random linearly independent $2$d or $3$d vector subspaces, see Fig.~\ref{fig:synthetic-data-kernels-sim} a) and c) respectively. For each class, $200$ data points are uniformly sampled from the unit sphere. The dataset is evenly split into train and test sets. We run 5 SPSA steps for $2$d and 55 steps for $3$d and utilize qiskit statevector simulator. The clear diagonal block structure in the kernel matrix (Fig.~\ref{fig:synthetic-data-kernels-sim}~b)~and~d)) motivates the interest in data generated from the union of subspaces. The $2$d vector subspace quantum and classical test set accuracies are $100\%$, whereas the $3$d subspace quantum test accuracy is $88\%$ and classical test accuracy is $100\%$.

\subsection{Hardware results on synthetic data}\label{appsub:hw-res}

Fig.~\ref{fig:qc156} visualizes the 156-qubit kernel circuit.

\begin{figure}%
\centering
\includegraphics[width=\textwidth]{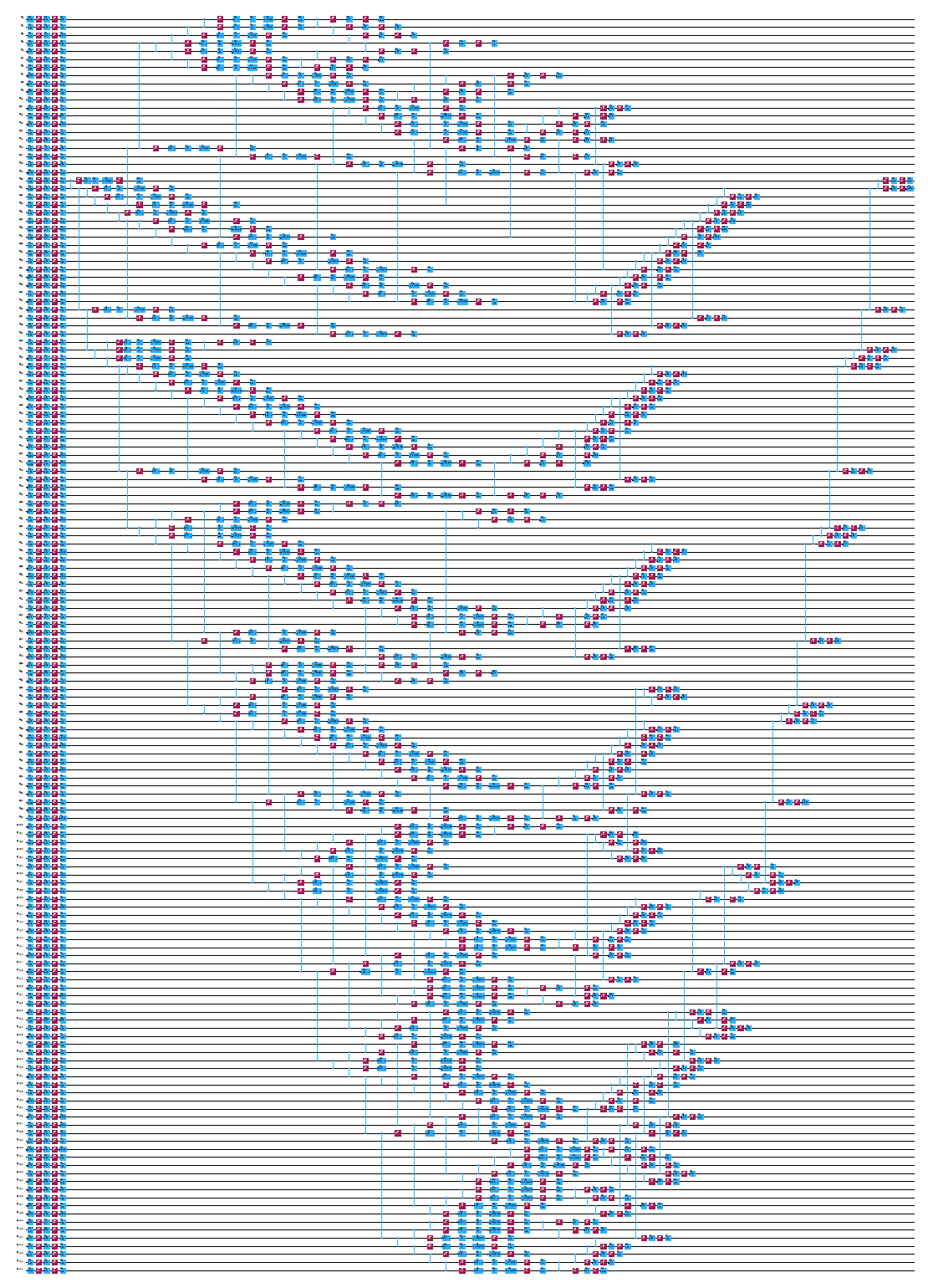}
\caption{
The quantum circuit that evaluates the 156-qubit kernel. }
\label{fig:qc156}%
\end{figure}

\subsection{Hardware specification of the IBM device}\label{appsub:device specifications}

A topology graph of IBM Marrakesh and some of its hardware specifications are provided in Fig.~\ref{fig:marrakesh}. %
\begin{figure}
	\centering
	\includegraphics[width=.7\linewidth]{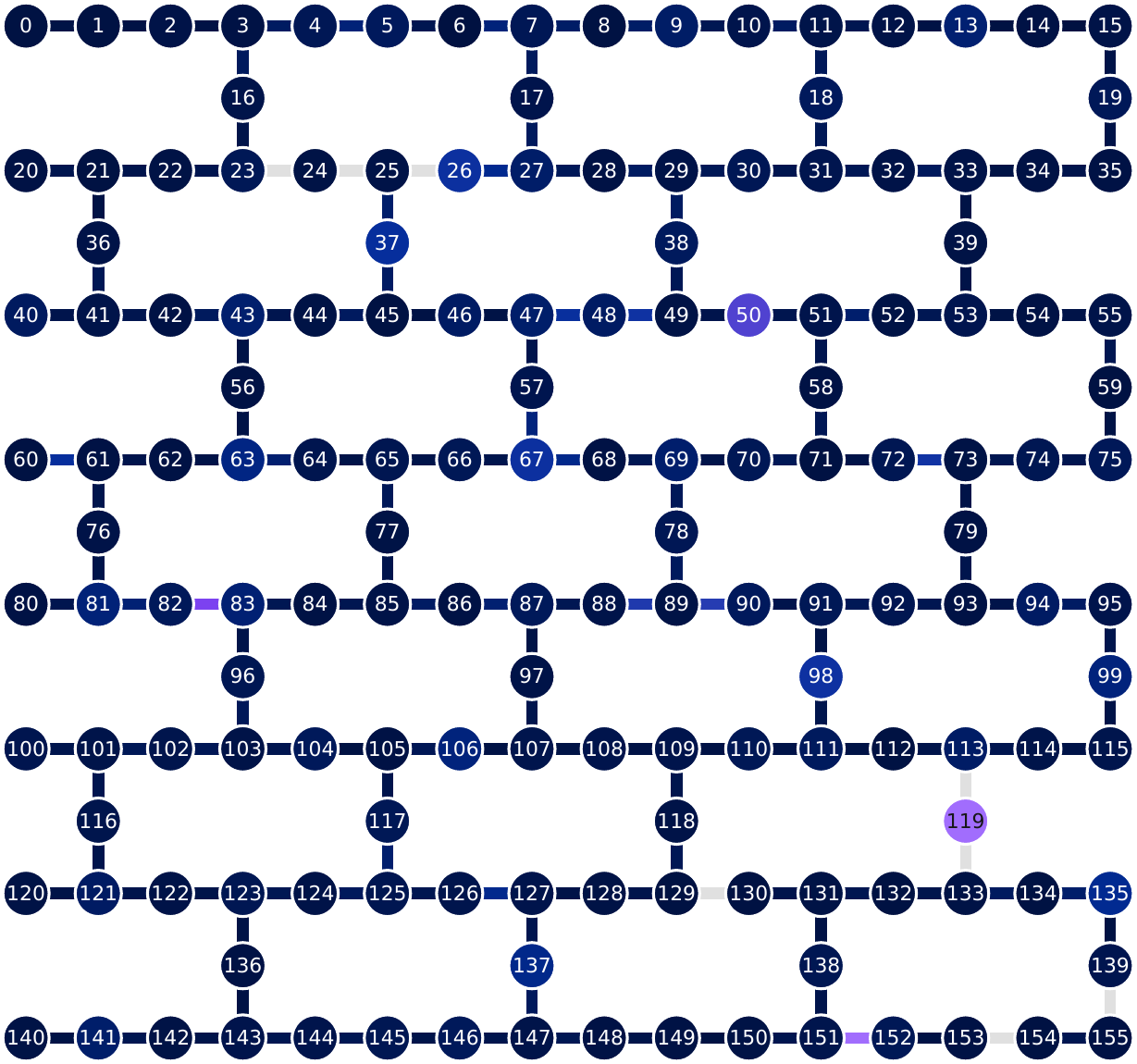}
    \\[2mm]
	\caption{The qubit topology graph of IBM Marrakesh, which is a 156-qubit quantum device. Lighter colors on qubits represent higher readout assignment errors, while on edges represent higher $CZ$ errors. This device contains a Heron r2 processor with basis gates $CZ$, $ID$, $RX$, $RZ$, $RZZ$, $SX$, and $X$ respectively. This quantum chip has median $CZ$ error of $2.281 \cdot 10^{-3}$, median $SX$ error $2.664 \cdot 10^{-4}$, median readout error $1.440 \cdot 10^{-2}$, median T1 is $175.11$ $\mu$s, and median T2 is $108.26$ $\mu$s respectively.}
	\label{fig:marrakesh}
\end{figure}

\end{appendices}

\clearpage
\bibliographystyle{unsrt}
\bibliography{references}%

\end{document}